% ****** Start of file aipsamp.tex ******
%
%   This file is part of the AIP files in the AIP distribution for REVTeX 4.
%   Version 4.1 of REVTeX, October 2009
%
%   Copyright (c) 2009 American Institute of Physics.
%
%   See the AIP README file for restrictions and more information.
%
% TeX'ing this file requires that you have AMS-LaTeX 2.0 installed
% as well as the rest of the prerequisites for REVTeX 4.1
% 
% It also requires running BibTeX. The commands are as follows:
%
%  1)  latex  aipsamp
%  2)  bibtex aipsamp
%  3)  latex  aipsamp
%  4)  latex  aipsamp
%
% Use this file as a source of example code for your aip document.
% Use the file aiptemplate.tex as a template for your document.
\documentclass[%
 aip,
% jmp,
% bmf,
% sd,
% rsi,
 amsmath,amssymb,
%preprint,%
 reprint,%
%author-year,%
%author-numerical,%
% Conference Proceedings
]{revtex4-1}

\usepackage{graphicx}% Include figure files
\usepackage{dcolumn}% Align table columns on decimal point
\usepackage{bm}% bold math
%\usepackage[mathlines]{lineno}% Enable numbering of text and display math
%\linenumbers\relax % Commence numbering lines

%\usepackage[utf8]{inputenc}
\usepackage[T1]{fontenc}
\usepackage{mathptmx}
\usepackage{etoolbox}

\bibliographystyle{apsrev4-1}

\usepackage{comment}
\usepackage{t1enc}
\usepackage[latin1]{inputenc}
\usepackage[english]{babel}
\usepackage{url}
\usepackage{theorem,xspace}
\usepackage{graphicx}
\usepackage{mathrsfs}
\usepackage{amsmath,amssymb}
\usepackage[all,poly,dvips]{xy}
\usepackage{todonotes}
\usepackage[ruled,vlined]{algorithm2e}
\usepackage{multirow}
\usepackage{float}
\usepackage{xcolor}
\usepackage{tikz}
\usetikzlibrary{arrows, chains, fit, quotes}
\usetikzlibrary{decorations.pathmorphing}
\usetikzlibrary{calc}
\usetikzlibrary{shapes.geometric}
\usetikzlibrary{angles}
\usetikzlibrary{quotes}
\tikzset{
  graph/.style={
    draw,
    circle
 }
 arc/.style={
    >=latex
 }
}

\tikzset{block/.style={rectangle, draw, max text width=6em, text badly centered, node distance=3cm, inner sep=4pt}}

\floatstyle{ruled}
\newfloat{Algorithm}{htpb}{alg}

\newcommand{\set}[1]{\{\,#1\,\}}

\newtheorem{teo}{Theorem}[section]
\newtheorem{theorem}{Theorem}[section]

\newtheorem{defi}{Definition}[section]

\newcounter{noqed}
\newcommand{\qed}{ \ifmmode\mbox{ }\fi\rule[-.05em]{.3em}{.7em}\setcounter{noqed}{0}}
\newenvironment{proof}[1][{}]{\noindent{\bf Proof#1. }\setcounter{noqed}{1}}{\ifnum\value{noqed}=1\qed\fi\par\medskip}

\newcommand{\mineq}{\approx}
\newcommand{\ami}{\operatorname{AMI}}

\newcommand{\tld}[2]{\widetilde{#1}^{#2}}

\newcommand{\view}[2]{\widetilde{#1}^{#2}}

\newcommand{\Aut}[1]{\operatorname{\textbf{Aut}}(#1)}

\renewcommand{\emptyset}{\varnothing}
\renewcommand{\phi}{\varphi}
\renewcommand{\epsilon}{\varepsilon}

\newcommand{\quot}[2]{#1/#2}

\newcommand{\n}[1][1pt]{\cir<#1>{}}

%% Apr 2021: AIP requests that the corresponding 
%% email to be moved after the affiliations
\makeatletter
\def\@email#1#2{%
 \endgroup
 \patchcmd{\titleblock@produce}
  {\frontmatter@RRAPformat}
  {\frontmatter@RRAPformat{\produce@RRAP{*#1\href{mailto:#2}{#2}}}\frontmatter@RRAPformat}
  {}{}
}%
\makeatother

\begin{document}

\preprint{AIP/123-QED}

\title[Quasifibrations of Graphs to Find Symmetries in Biological Networks]{Quasifibrations of Graphs to Find Symmetries in Biological Networks}

\author{Paolo Boldi}
    \affiliation{Computer Science Department, Universit\`a degli Studi di Milano, Milan, Italy.}
\author{Ian Leifer}
    \affiliation{Levich Institute and Physics Department, City College of New York, New York, NY 10031}
\author{Hern\'an A. Makse}
    \affiliation{Levich Institute and Physics Department, City College of New York, New York, NY 10031}
 \email{hmakse@ccny.cuny.edu.}

\date{\today}% It is always \today, today,
             %  but any date may be explicitly specified

\begin{abstract}
A \emph{fibration} of graphs is an homomorphism that is a local
isomorphism of in-neighbourhoods, much in the same way a covering
projection is a local isomorphism of neighbourhoods.  Recently, it has
been shown that graph fibrations are useful tools to uncover
symmetries and synchronization patterns in biological networks ranging
from gene, protein, and metabolic networks to the brain. However, the
inherent incompleteness and disordered nature of biological data
precludes the application of the definition of fibration \emph{as it is}; as a consequence, also the
currently known algorithms to identify
fibrations fail in these domains. In this paper, we introduce and develop systematically the theory of
quasifibrations which attempts to capture more realistic patterns of
almost-synchronization of units in biological networks. We provide an
algorithmic solution to the problem of finding quasifibrations in
networks where the existence of missing links and variability across
samples preclude the identification of perfect symmetries in the
connectivity structure.  We test the algorithm against other
strategies to repair missing links in incomplete networks using real
connectome data and synthetic networks. Quasifibrations can be
applied to reconstruct any incomplete network structure characterized
by underlying symmetries and almost synchronized clusters.
\end{abstract}

\maketitle

\begin{quotation}
Symmetry is a foundational principle underlying the behavior of numerous natural systems. Biological networks have been recently shown ~\cite{morone2020PNAS,leifer2020PLOS,morone2019NatComm,LeiferBMC20} to exhibit the symmetry stemming from the existence of graph fibrations. We refer to this symmetry as fibration symmetry. However, experimental observations providing the basis for constructing
biological networks are inherently imperfect, as biological systems
are noisy and disordered, and perfect symmetries are never realized in
biology. To address this problem, we introduce the concept of
quasifibrations to describe more realistic patterns of 
synchronization in biological networks. We provide an optimization algorithm
to obtain quasifibrations in graphs along with a network
reconstruction procedure of an incomplete network with missing links
to restore the perfect ideal symmetry in the network.
\end{quotation}

\section{Introduction}
\label{sec:intro}

A large part of modern physics is built on the principles of symmetry,
typically expressed using the mathematical theory of groups and group
actions spanning from elementary particles and fundamental forces to
the state of matter. Symmetries and group theory are at the core of
the theoretical formulation of the Standard Model, which provides a
unified description of all existing particles and their interactions
(exclusion made for gravity).

While the concept of symmetry is at the basis of such important
conceptual advances in physics, it is not (yet) as pervasively adopted
in biology with the exceptions of the work of D'Arcy
Thompson~\cite{darcy} and the discussions emerging from the Eleventh
Nobel Symposium held in 1968, chaired by Jacques Monod ~\cite{monod}.
In our opinion, two crucial aspects probably contributed to determine
the reduced attention to the role played by symmetries in biological
systems:
\begin{itemize}
  \item conventionally, the symmetries of a network are identified with the
permutations of the nodes that leave the adjacency matrix
invariant (the so-called \emph{network automorphisms}); this form of symmetry is very rigid and
it is rarely observed in biology; moreover, automorphisms are not the right tool to describe the notion of synchronization,
that is the type of symmetry that more often occurs in living networks;
  \item even playing with more general kinds of symmetry, biological systems are never exactly symmetrical: the high
redundance of biological systems leaves space to small asymmetries, that are influential to the overall working of the system
but fail to be captured by a rigorous mathematical definition. 
\end{itemize}  

%monteiroAlgorithm It can be identified using a slightly modified version of the Paige and Tarjan algorithm \cite{paige1987} introduced in \cite{monteiroAlgorithm}.
Recent works~\cite{morone2019NatComm, morone2020PNAS, leifer2020PLOS, LeiferBMC20}
provided credible evidence that the right notion of symmetry exhibited
by biological networks is that induced by \emph{graph fibrations}. Existence of graph fibrations have been shown to produce the synchronized solutions in a network \cite{stewart2006, lerman2013, nijholt2016} that were observed on the experimental data \cite{LeiferBMC20}. In this paper we shall relax the definition of fibration symmetry to allow for errors, introducing the notion of \emph{quasifibration}. Quasifibrations are a more realistic way to describe synchronization patterns in living networks. We shall also provide practical techniques to reconstruct symmetries in spite of errors, and discuss whether and how such techniques can be used to study real-world biological systems. Reference \cite{Leifer21a} presents a more general approach to the graph reconstruction problem that can be applied to both directed and undirected graphs, however the approach presented here is considerably faster and is easier to scale.

The paper is organized as follows. Section \ref{sec:defs} sets up the
basic definitions and reviews the previous literature on graph
fibrations.  Section \ref{sec:auto} further elaborates on previous
results on symmetry identification, comparing graph fibrations with
the stronger notion of automorphism.  Section \ref{sec:ufc} reviews
the concepts of input trees, universal and minimal
graph fibrations which are necessary to generalize the problem to
quasifibrations.  Section \ref{sec:tools} then introduces the concept
of quasifibration and the proposed algorithmic steps to identify
quasifibrations in networks and repair the missing or excess links in 
networks with symmetries.  Finally, section
\ref{sec:experiments} shows computational experiments testing the
quasifibration algorithm against different strategies to repair
networks using synthetic data and real data on previously analyzed
connectomes. We conclude the paper with summary and outlook.
All the algorithms are available at \url{https://github.com/boldip/qf}.

\section{Definitions and basic properties}
\label{sec:defs}

\subsection{Graph-theoretical definitions}

\paragraph{Graphs.}
A \emph{(directed multi)graph} (or ``network'') $G$ is defined by a set $N_G$ of nodes and a set
$A_G$ of arcs, and by two functions $s_G,t_G:A_G\to N_G$ that specify the
source and the target of each arc (we shall drop the subscripts whenever no
confusion is possible). Note that, differently from other definitions commonly used
in the literature, arcs have a direction (that is why our graphs are ``directed'') and
we may have multiple parallel arcs with the same source and the same target (that is why 
``multigraphs''). Unless otherwise specified, both $N$ and $A$ are finite.
An example of a graph is shown in Figure~\ref{fig:graph-example}.

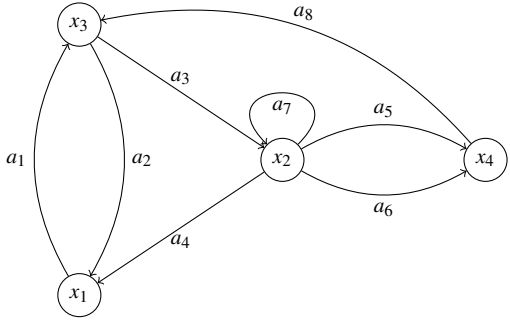
\begin{figure}
    \centering
    \begin{tikzpicture}[scale=0.9, every node/.style={scale=0.9}]
	\node[graph, circle, draw] (x1) at (0,0) {$x_1$};
	\node[graph, circle, draw] (x2) at (3,2) {$x_2$};
	\node[graph, circle, draw] (x3) at (0,4) {$x_3$};
	\node[graph, circle, draw] (x4) at (6,2) {$x_4$};
	
	\draw[->,bend left] (x1) to node[left] {$a_1$} (x3);
	\draw[->,bend left] (x3) to node[right] {$a_2$} (x1);
	\draw[->] (x3) to node[above] {$a_3$} (x2);
	\draw[->] (x2) to node[below] {$a_4$} (x1);
	\draw[->,bend left] (x2) to node[above] {$a_5$} (x4);
	\draw[->,bend right] (x2) to node[below] {$a_6$} (x4);
	\draw[loop, ->] (x2) to[below] node {$a_7$} (x2);
	\draw[->,bend right] (x4) to node[above] {$a_8$} (x3);
\end{tikzpicture}
    \caption{\label{fig:graph-example}
      An example of graph, with four nodes and eight
      arcs: nodes are represented by circles, arcs by arrows stemming from
      their source and ending at their target. Observe that there are two
      parallel arcs in $G(x_2, x_4)=\set{a_5, a_6}$ and there is a loop
      $a_7$ at $x_2$.}
\end{figure}

We use the notation $G(x,y)$ for denoting the set of
arcs from $x$ to $y$, that is, the set of arcs $a\in A_G$ such that $s(a)=x$ and
$t(a)=y$. A \emph{loop} is an arc with the same source and target. 
Following common usage, we denote with $G(-,x)$ the set of arcs coming
into $x$, that is, the set of arcs $a\in A_G$ such that $t(a)=x$, and analogously with
$G(x,-)$ the set of arcs going out of $x$.

% A \emph{symmetric graph} is a graph endowed with a symmetry, that is, an
% involution (a self-inverse bijection) $(\ovr{\phantom{a}}):A_G\to A_G$ such that
% $s(a)=t(\ovr a)$ (and consequently $t(a)=s(\ovr a)$) for all arcs $a\in A_G$. A
% \emph{semi-edge} of a symmetric graph is a loop $a$ such that $\ovr a=a$.  Given a
% graph $G$, we define its \emph{(formal) symmetrization} $\sym G$ as the graph
% obtained by adding for each arc $a \in G(x,y)$ a new arc $\ovr a$ going from $y$ to
% $x$, with symmetry defined in the obvious way.
% 
% A graph $G$ is \emph{$j$-inregular} (\emph{$k$-outregular}) if $|G(-,x)|=j$
% ($|G(x,-)|=k$, respectively). A $j$-inregular, $k$-outregular graph is said to
% be \emph{$(j,k)$-regular}. For finite or symmetric graphs $(j,k)$-regularity
% implies $j=k$, and when $j=k$ we simply say that $G$ is $j$-regular. 

A \emph{path} (of length $n$) is a sequence $x_0,a_1,x_1,\cdots,x_{n-1},a_n,
x_n$ alternating nodes $x_i\in N_G$ and arcs $a_j\in A_G$ in such a way that $s(a_j)=x_{j-1}$ and $t(a_j)=x_j$. We
shall usually omit the nodes from the sequence when at least one arc is present.
% If $G$ is symmetric, a path is called \emph{symmetrically stuttering} (or,
% simply, stuttering) iff it contains a subpath of the form $a\ovr a$; a
% \emph{nonstuttering walk} of a graph $G$ is a nonstuttering path of $\sym G$.
% Since we shall only be concerned with walks of this kind, we shall drop the
% adjective ``nonstuttering'' in the sequel. 
We shall say that $G$ is
\emph{(strongly) connected} iff for every choice of $x$ and $y$ there is a
path from $x$ to $y$; the \emph{diameter} $D_G$ of a strongly connected
graph is the maximum length of a shortest path between two nodes.

The graph of Figure~\ref{fig:graph-example} is strongly connected; an example of 
a path from $x_1$ to $x_2$ is represented by the sequence $x_1,a_1,x_3,a_3,x_2,a_7,x_2,a_5,x_4,a_8,x_3,a_3,x_2$. There are, in fact,
infinitely many paths from $x_1$ to $x_2$.

We shall occasionally deal with node- (or arc-) coloured graphs: a \emph{node-coloured graph} (arc-coloured graph, respectively)
with set of colours $C$ is a graph endowed with a colouring function
$\gamma: N_G\to C$ ($\gamma: A_G \to C$, respectively). 
% For symmetric graphs, we require that there is an involution
% $(\ovr{\phantom{a}}):C\to C$ such that $\gamma(\ovr a)=\ovr{\gamma(a)}$.  A (coloured)
% graph is \emph{separated} iff it has no parallel arcs (with the same colour).
% The name originates from the fact that such graphs are separated for the double
% negation topology in the topos of (coloured) graphs---see~\cite{VigGTTG}.

\paragraph{Graph homomorphisms.}
For standard directed graphs, homomorphisms are defined only on nodes, with the requirements
that existing arcs are preserved (i.e., if there was an arc between two nodes in the starting graph,
then there must be an arc between the images of these two nodes in the ending graph).
Extending this definition to the case of multigraphs requires some care, because we can
decide separately how to map nodes and arcs, but we must ensure that the two maps agree.
Formally, a \emph{graph homomorphism} $\xi:G\to H$ is given by a pair of
functions $\xi_N:N_G\to N_H$ and $\xi_A:A_G\to A_H$ commuting with the
source and target maps, that is, for all $a \in A_G$,
$s_H(\xi_A(a))=\xi_N(s_G(a))$ and $t_H(\xi_A(a))=\xi_N(t_G(a))$
(again, we shall drop the subscripts whenever no confusion is
possible). As we said, the meaning of the commutation conditions is that 
a homomorphism maps nodes to nodes and arcs to arcs in
such a way to preserve the incidence relation. (In the case of
node or arc coloured graphs, we also require $\xi_N$ or $\xi_A$ to commute with the colouring
function.)
%A morphism between
%symmetric graphs is \emph{symmetric} iff it commutes with the symmetries.  
A homomorphism is \emph{surjective}\footnote{Often called an \emph{epimorphism} in the literature.} iff
$\xi_N$ and $\xi_A$ are both surjective. An example of graph homomorphism
is shown in Figure~\ref{fig:morph-example}.

\begin{figure*}
    \centering
    \begin{tabular}{ccc}
        \begin{tikzpicture}[scale=1, every node/.style={scale=0.9}]
        	\node[graph, draw, circle, fill=green] (x1) at (0,0) {$x_1$};
        	\node[graph, draw, circle, fill=red] (x2) at (3,2) {$x_2$};
        	\node[graph, draw, circle, fill=red] (x3) at (0,4) {$x_3$};
        	\node[graph, draw, circle, fill=green] (x4) at (6,2) {$x_4$};
        	
        	\draw[->,bend left, blue] (x1) to node[left] {$a_1$} (x3);
        	\draw[->,bend left, red] (x3) to node[right] {$a_2$} (x1);
        	\draw[->, cyan] (x3) to node[above] {$a_3$} (x2);
        	\draw[->, red] (x2) to node[below] {$a_4$} (x1);
        	\draw[->,bend left, red] (x2) to node[above] {$a_5$} (x4);
        	\draw[->,bend right, red] (x2) to node[below] {$a_6$} (x4);
        	\draw[loop, ->, cyan] (x2) to[below] node {$a_7$} (x2);
        	\draw[->,bend right, brown] (x4) to node[above] {$a_8$} (x3);
        \end{tikzpicture}
            & $\qquad$ &
        \begin{tikzpicture}[scale=1, every node/.style={scale=0.9}]
        	\node[graph, draw, circle, fill=green] (y1) at (0,0) {$y_1$};
        	\node[graph, draw, circle, fill=red] (y2) at (0,4) {$y_2$};
        	\node[graph, draw, circle] (y3) at (2,2) {$y_3$};
        	
        	\draw[->,bend left, blue] (y1) to node[left] {$b_3$} (y2);
        	\draw[->,bend left=80, brown] (y1) to node[left] {$b_4$} (y2);
        	\draw[->,bend left, red] (y2) to node[left] {$b_1$} (y1);
        	\draw[->,loop, cyan] (y2) to node[below] {$b_2$} (y2);
        	\draw[->] (y1) to node[below] {$b_5$} (y3);
        	\draw[->] (y3) to node[above] {$b_6$} (y2);
        \end{tikzpicture} 
    \end{tabular}
    \caption{\label{fig:morph-example}An example of a graph homomorphism
      $\xi: G \to H$. Here, we use node (arc) colours to indicate how
      nodes (arcs) are mapped: every node (arc) of $G$ is mapped to the
      only node (arc) of $H$ with the same colour; $\xi$ is not a
      surjective (the black node and arcs of $B$ do not have any
      counterimage). To see that $\xi$ is a morphism, note that the colors of source
      and target are respected; for instance all red arcs have a red source and a green target
      (because they are all mapped to $b_1$).
      Incidentally, if we think of $G$ and $B$ as being
      really coloured as in the picture, $\xi$ is a correct homomorphism
      between coloured graphs (because it commutes with the colouring
      function).}
\end{figure*}
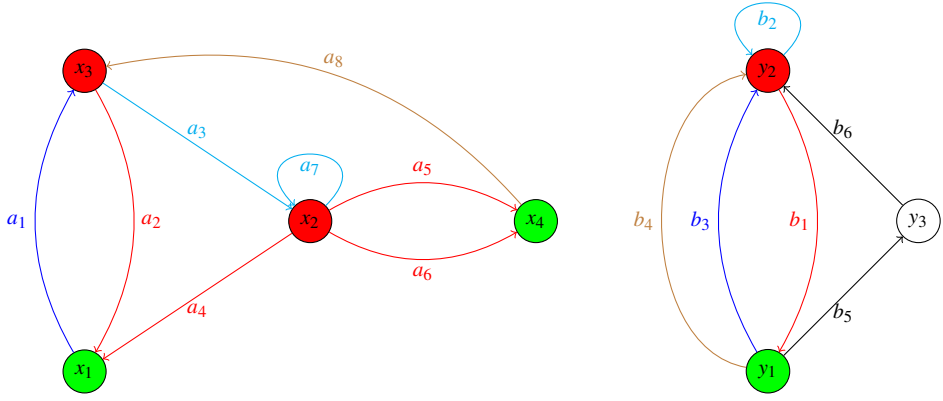

Epimorphisms that are also injective (both on nodes and on arcs) are called
\emph{isomorphisms}; we write $G \cong H$ iff there is an isomorphism between $G$ and $H$: 
isomorphisms are just a way to change the ``identity'' of nodes and arcs without
chaning the graph structure. Since we are only interested in the structural properties of a network, we normally consider
graphs only up to isomorphism (that is, we identify a graph $G$ with any other graph $H$ such that $G \cong H$).  
An isomorphism $\alpha: G \to G$ is called an \emph{automorphism} of $G$: automorphisms form a group (denoted by $\Aut G$) with respect to function
composition.  A graph with no non-trivial automorphisms (i.e., no automorphism other than the identity) is called \emph{rigid}.

\subsection{Fibrations}

The central concept we are going to deal with is that of \emph{graph
  fibration}, a particular kind of graph homomorphism. Fibrations can
be thought of as a relaxed version of isomorphisms, as we shall see in
a second.  
The categorical definition of fibration is credited to Alexandre Grothendieck~\cite{GroTDTEGAI}; it
builds upon previous similar concepts in algebraic graph theory~\cite{SacSUGG} and is closely related
to the generalization of covering in topology~\cite{PSVEGB}. The application of this definition to graphs (seen as free categories) was given
in~\cite{BoVGF} and was largely used in the context of distributed systems.   

 \begin{defi}
 \label{def:fib}
A \emph{fibration} between graphs $G$ and $B$ is a homomorphism
$\phi:G\to B$ such that for each arc $a\in A_B$ and for each node
$x\in N_G$ satisfying $\phi(x)=t(a)$ there is a unique arc $\tld ax\in
A_G$ (called the \emph{lifting of $a$ at $x$}) such that $\phi(\tld
ax)=a$ and $t(\tld ax)=x$.
 \end{defi}

We inherit some topological terminology. If $\phi:G\to B$ is a
fibration, $G$ is called the \emph{total graph} and $B$ the
\emph{base} of $\phi$. We shall also say that $G$ is \emph{fibred
  (over $B$)}. The \emph{fibre} over a node $x\in N_B$ is the set of
nodes of $G$ that are mapped to $x$, and shall be denoted by
$\phi^{-1}(x)$. A fibre is \emph{trivial} if it is a singleton, that
is, if $|\phi^{-1}(x)|=1$. A fibration is \emph{nontrivial} if at
least one fibre is nontrivial, \emph{trivial} otherwise; it is
\emph{proper} if all fibres are nontrivial.

The homomorphism of Figure~\ref{fig:morph-example} is \emph{not} a
fibration: for instance, the fibre over $y_2$ is $\set{x_2, x_3}$, but
while arcs $b_3$ and $b_4$ can both uniquely be lifted at $x_3$ ($\tld
{b_3}{x_3}=a_1$ and $\tld {b_4}{x_3}=a_8$), arc $b_2$ cannot be lifted
at all at $x_3$, while it can be lifted twice at $x_2$.

\medskip
There is an intuitive characterization of fibrations based on the
concept of local in-isomorphism. An equivalence relation $\simeq$
between the nodes of a graph $G$ is said to satisfy the \emph{local
  in-isomorphism property} if the following holds:
\begin{itemize}
\item[] {\bf Local In-Isomorphism Property:} If $x \simeq y$ there exists a
(colour-preserving, if $G$ is coloured) bijection $\psi: G(-,x) \to G(-,y)$
such that $s(a) \simeq s(\psi(a))$, for all $a\in G(-,x)$.
\end{itemize}
The following proposition shows that fibrations and surjective homomorphisms whose fibres
satisfy the previous property are naturally equivalent:
\begin{teo}[\cite{BoVGF}, Theorem 2.1]
\label{teo:lip}
Let $G$ be a graph. Then:
\begin{enumerate}
\item\label{enu:fiblip} if $\phi: G \to B$ is a fibration, then the equivalence relation on the
nodes of $G$ whose equivalence classes are the nonempty fibres of $\phi$ satisfies the
local in-isomorphism property;
\item\label{enu:lipfib} if $\simeq$ is a relation between nodes of $G$ satisfying the local in-isomorphism property, then
there exists a graph $B$ and a surjective fibration $\phi:G \to B$ whose fibres are the
equivalence classes of $\simeq$.
\end{enumerate}
\end{teo}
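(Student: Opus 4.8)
The plan is to prove the two implications separately, translating in each direction between the unique-lifting condition of Definition~\ref{def:fib} and the Local In-Isomorphism Property (LIP).

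For part~(\ref{enu:fiblip}), let $\phi:G\to B$ be a fibration and let $\simeq$ be the equivalence whose classes are the nonempty fibres. Suppose $x\simeq y$, so $\phi(x)=\phi(y)=:b$. I would construct the required bijection $\psi:G(-,x)\to G(-,y)$ as follows. Given an incoming arc $a\in G(-,x)$, its image $\phi(a)\in B(-,b)$ is an arc of $B$ landing at $b$; since $\phi(y)=b$, the fibration property gives a unique lifting $\tld{\phi(a)}{y}\in G(-,y)$, and I set $\psi(a)=\tld{\phi(a)}{y}$. The symmetric construction (swapping the roles of $x$ and $y$) yields an inverse, so $\psi$ is a bijection. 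To check $s(a)\simeq s(\psi(a))$, observe that both arcs map under $\phi$ to the same arc $\phi(a)$ of $B$, hence their sources map to $s(\phi(a))$, placing them in the same fibre. Colour-preservation, when $G$ is coloured, is immediate because $\phi$ commutes with the colouring and $\psi(a)$ shares the $B$-image (hence the colour) of $a$.

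For part~(\ref{enu:lipfib}), given $\simeq$ satisfying LIP, I would build $B$ as the quotient. Take $N_B=N_G/\!\simeq$; the construction of $A_B$ is the delicate point. The natural idea is to define the arcs of $B$ out of the bijections $\psi$ guaranteed by LIP, so that for each class $[x]$ the incoming arcs of $B$ at $[x]$ are in bijection with $G(-,x)$ for a chosen representative $x$. Concretely, one may fix a representative in each class and declare the arcs into $[x]$ to be the $\simeq$-equivalence classes of incoming $G$-arcs, where two incoming arcs $a\in G(-,x)$ and $a'\in G(-,x')$ (with $x\simeq x'$) are identified precisely when $a'=\psi(a)$ for the LIP bijection; this forces source and target to be well defined on classes. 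Then $\phi$ sends each node to its class and each arc to the corresponding arc-class. Surjectivity is clear, and the unique-lifting property holds essentially by fiat: an arc of $B$ at $[x]$ corresponds to a unique incoming arc at each node of the fibre, which is exactly what LIP's bijections provide.

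\textbf{The hard part} is verifying that the identification used to define $A_B$ is a genuine equivalence relation and that source/target are well defined independently of the choices of representatives and of the LIP bijections. The bijections $\psi$ are only asserted to exist pairwise; to glue them into a coherent arc set one must check a cocycle-type compatibility (that composing $\psi_{x\to y}$ with $\psi_{y\to z}$ agrees with $\psi_{x\to z}$ on the relevant arcs, or else work with a transversal to sidestep this). I expect the cleanest route is to avoid choosing a single family of bijections: instead define, for an arc class of $B$ into $[x]$, the data to be ``one incoming $G$-arc at every node of the fibre, all mapping to $\simeq$-related sources,'' and check that LIP guarantees each such selection is a singleton per node. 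This makes the lifting uniqueness transparent and defers the compatibility burden to a routine check that the resulting relation on arcs partitions $A_G$ into the fibres of $\phi$.
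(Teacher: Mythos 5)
The paper never proves this statement---it imports it verbatim from~\cite{BoVGF} (Theorem 2.1)---so your proposal must be judged on its own merits rather than against an in-paper argument. Part~(\ref{enu:fiblip}) is correct and complete: defining $\psi(a)$ as the unique lifting of $\phi(a)$ at $y$, obtaining the inverse by the symmetric construction together with uniqueness of liftings, and observing that $s(a)$ and $s(\psi(a))$ both lie in the fibre over $s(\phi(a))$, is exactly the right argument, and the colour-preservation remark is fine.

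Part~(\ref{enu:lipfib}) has a genuine gap, located precisely where you placed your bet. The route you call ``cleanest''---taking an arc of $B$ into $[x]$ to be a selection of one incoming $G$-arc at every node of the fibre, with $\simeq$-related sources---cannot work: if the fibre $[x]$ has $m$ nodes and each receives $k\geq 2$ arcs from nodes of a class $[z]$, there are $k^m$ such selections, whereas for $\phi$ to be a fibration with these fibres one must have exactly $k$ arcs of $B$ from $[z]$ to $[x]$ (each arc of $B$ into $[x]$ lifts uniquely at each fibre node, so $B([z],[x])$ is in bijection with the arcs from $[z]$ into any single fibre node). The other reading---arcs of $B$ as classes generated by arbitrary pairwise LIP bijections---fails for the reason you yourself identified: without a cocycle condition the generated classes need not contain exactly one arc per fibre node. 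Concretely, with fibre $\{x,y\}$, arcs $a_1,a_2$ into $x$ and $b_1,b_2$ into $y$ all with equivalent sources, the legitimate LIP bijections $a_i\mapsto b_i$ and $b_1\mapsto a_2$, $b_2\mapsto a_1$ glue all four arcs into a single class. The fix is the transversal you mention only in passing, and it should be the construction rather than a fallback: fix a representative $x_0$ in each class, declare $B(-,[x_0]):=G(-,x_0)$ with source/target read off through the class map, choose for each $y\simeq x_0$ \emph{one} LIP bijection $\psi_y\colon G(-,x_0)\to G(-,y)$ (the identity for $y=x_0$), and define $\phi(a'):=\psi_y^{-1}(a')$ for $a'\in G(-,y)$. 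Only the ``star'' of bijections out of each representative is used, so no compatibility among the $\psi_y$'s ever arises; LIP gives $s(\psi_y^{-1}(a'))\simeq s(a')$, so $\phi$ is a well-defined homomorphism; surjectivity is clear; and unique lifting at $y$ holds because $\psi_y$ is a bijection. With that substitution your outline becomes a complete proof.
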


Another possible, more geometric way of interpreting the definition of
fibration is that given a node $x$ of $B$ and path $\pi$ terminating at $x$,
for each node $y$ of $G$ in the fibre of $x$ there is a unique path terminating
at $y$ that is mapped to $\pi$ by the fibration; this path is called the
\emph{lifting of $\pi$ at $y$}, and it is denoted by $\tld \pi y$. 

\paragraph{Intuition: System simulation via fibrations.} 
In order to provide a more intuitive grasp on the notion of fibration we 
want to briefly discuss how it is used in distributed computing.
Consider a graph $G$, whose nodes represent entities of
some kind: we will call them \emph{processors}, but our description is more general, 
and applies \emph{mutatis mutandis} to any system composed by autonomous interacting
units, or agents. Each processor has an internal state belonging to a set of possible states $X$. 
During the evolution of the system, each processor
changes its state depending on its own current state and on the states of its
in-neighbours: equivalently, arcs represent unidirectional links along which a
processor transmits messages. 

In a distributed system like this, the behaviour of each processor $x$ is partly influenced by the behaviour of its
in-neighbours and, more in general, by all the processors that can (directly or indirectly) communicate with $x$. 
If we possessed a complete knowledge of the system internals (what is the set of states $X$, what exact protocol
each processor is following, what kind of signals are transmitted, and when) we may provide a precise description of
the system \emph{global} behaviour, for example, to simulate it or to make predictions or understand correlations.
In particular, we are interested in capturing the symmetries that the system exhibits: which groups of processors 
behave alike? 

\begin{figure}
\centering
\begin{tikzpicture}[scale=0.9, every node/.style={scale=0.9}]
	\node[regular polygon, regular polygon sides=6, minimum size=5cm] at (0,0) (A) {};
	\foreach \i in {1,...,6}
		\node[circle, draw] at (A.corner \i) (X \i) {$x_{\i}$};
	\node[circle, draw] at (-1,0.6) (L) {$y_1$};
	\node[circle, draw] at (-2.2,1.4) (W) {$w$};
	\node[circle, draw] at (1,-0.6) (R) {$y_2$};
	\node[circle, draw] at (0,0) (C) {$z$};
	\foreach \i in {1,...,6} {
		\draw[->] (X \i) to (L);
		\draw[->] (X \i) to (R);
		\draw[->] (C) to (X \i);
	}
	\draw[->] (L) to (C);	
	\draw[->] (R) to (C);	
	\draw[->] (L) to (W);
\end{tikzpicture}
\caption{\label{fig:intuition-example}An example of a distributed system: During the evolution of the system, each processor
changes its state depending on its own current state and on the messages received from its
in-neighbours along the arcs, representing unidirectional communication links.}
\end{figure}
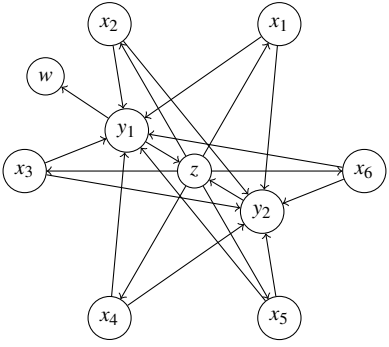

As we said, with a complete knowledge of the individual processors and their behaviour the task of determining symmetries 
is, at least in principle, possible. But what if the only knowledge we possess is the connection graph $G$?
What are the symmetries that the graph exhibits, in absence of other information? 

These questions are translated into one of the main problems of distributed systems~\cite{AngGLPNP}: establish
which configurations of states can be reached when all processors
start from the same state, change state at the same time and run the
same algorithm---or, as usually stated, when
the network is \emph{anonymous} (or \emph{uniform}) and
synchronous. The main point to be noted here is that, under such
constraints, \emph{the existence of a fibration $\phi: G\to B$ forces
  all processors in the same fibre to remain always in the same
  state}.

More precisely, every behaviour of the processors of the ``large''
graph $G$ can be simulated on the ``smaller'' graph\footnote{If we
  assume that $\phi$ is surjective but not injective on the
  nodes then $G$ really has more nodes than $B$, which explains why we
  call $G$ ``large'' and $B$ ``small''.}  $B$: the fibration describes
how nodes and arcs of $G$ are mapped to nodes and arcs of $B$.  Nodes
in the same fibre (that is, mapped to the same node of $B$) cannot
distinguish from each other because of the local in-isomorphism
property---a general homomorphism would not make the simulation
possible, because it may map together nodes with different views of the
system.  In this sense, fibrations are related to how the system can
be mapped to a smaller, but simulation-equivalent, core.

Let us look at Figure~\ref{fig:intuition-example-colored}, where we represent how the network $G$ can be fibred over a smaller network.
Each of the cyan nodes $x_i$ in $G$ is equivalent to the unique cyan node $x$ in $B$: they receive only one input (each) from the red-coloured
node (called $z$ in both networks).
Each of the two pink nodes $y_i$ in $G$ is equivalent to the unique pink node $y$ in $B$: they each receive six inputs from the cyan-coloured nodes.
In $G$ there are six such cyan-colored nodes, but they all behave alike (because they behave like the unique cyan node of $B$), so
the fact that each $y_i$ in $G$ receives inputs from six different cyan nodes, while $y$ in $B$ receives six times the input from the same node is
irrelevant (because the six cyan nodes in $G$ behave all like the single cyan node of $B$).
Similarly, $z$ in $G$ receives inputs from $y_1$ and $y_2$, whereas $z$ receives twice input from $y$.
Finally, $w$ in $G$ receives inputs from $y_1$ only, and so its behaviour can be simulated by the unique green node $w$ in $B$.

Otherwise said: $\phi$ is a local in-isomorphism. What locally each node $x$ of $G$ sees of its in-neighborhood in $G$ is the same as
what the node $\phi(x)$ sees of its in-neighborhood in $B$.
The fibres of the fibration (i.e., the colours of the nodes in $G$) represent the kind of symmetry we are looking for. 

\begin{figure*}
    \centering
        \begin{tabular}{ccc}
            \begin{tikzpicture}[scale=0.9, every node/.style={scale=0.9}]
            	\node[regular polygon, regular polygon sides=6, minimum size=5cm] at (0,0) (A) {};
            	\foreach \i in {1,...,6}
            		\node[circle, draw, fill=cyan] at (A.corner \i) (X \i) {$x_{\i}$};
            	\node[circle, draw, fill=pink] at (-1,0.6) (L) {$y_1$};
            	\node[circle, draw, fill=green] at (-2.2,1.4) (W) {$w$};
            	\node[circle, draw, fill=pink] at (1,-0.6) (R) {$y_2$};
            	\node[circle, draw, fill=red]  at (0,0) (C) {$z$};
            	\foreach \i in {1,...,6} {
            		\draw[->] (X \i) to (L);
            		\draw[->] (X \i) to (R);
            		\draw[->] (C) to (X \i);
            	}
            	\draw[->] (L) to (C);	
            	\draw[->] (R) to (C);
            	\draw[->] (L) to (W);	
            \end{tikzpicture} 
            & $\qquad$ &
            \begin{tikzpicture}[scale=0.9, every node/.style={scale=0.9}]
            	\node[circle, draw, fill=red]  at (0,0) (Z) {$z$};
            	\node[circle, draw, fill=cyan] at (0,3) (X) {$x$};
            	\node[circle, draw, fill=pink] at (0,6) (Y) {$y$};
            	\node[circle, draw, fill=green] at (2,6) (W) {$w$}; 
            	\draw[->] (X) to[bend left=15] (Y);	
            	\draw[->] (X) to[bend left=-15] (Y);	
            	\draw[->] (X) to[bend left=30] (Y);	
            	\draw[->] (X) to[bend left=-30] (Y);	
            	\draw[->] (X) to[bend left=45] (Y);	
            	\draw[->] (X) to[bend left=-45] (Y);	
            	\draw[->] (Y) to[bend left=60] (Z);	
            	\draw[->] (Y) to[bend right=60] (Z);
            	\draw[->] (Z) to (X);
            	\draw[->] (Y) to (W);
            \end{tikzpicture} \\[1em]
            $G$ & & $B$
        \end{tabular}
    \caption{\label{fig:intuition-example-colored}A graph fibration $\phi$
      between the graph $G$ of Figure~\ref{fig:intuition-example} and
      another graph $B$. The node-component of $\phi$ is represented by
      the colours on the nodes; the arc-component is irrelevant (any map
      that makes it a graph homomorphism will do).}
\end{figure*}
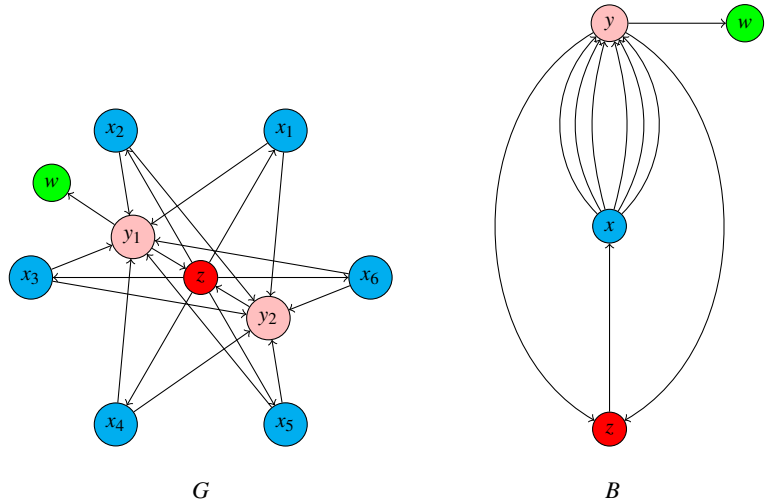

\section{Groups, fibrations and automorphisms}
\label{sec:auto}
Fibrations are a way to identify symmetries in a graph, much like automorphisms.
The reader familiar with automorphism may wonder in which sense fibrations are different from (or more useful than)
automorphisms. We are going to discuss this issue here.

A standard way to describe group symmetries is by using \emph{actions}:
a \emph{(left and faithful) action} of a group $\Gamma$ over $G$ is an injective function
$\alpha: \Gamma \to \Aut G$ that is a group homomorphism.\footnote{A function $\alpha$ between groups is a
group homomorphism iff $\alpha(xy)=\alpha(x)\alpha(y)$.}
The action induces an equivalence relation both on the nodes and
on the arcs of $G$, whose classes (the \emph{orbits} of $\alpha$) are denoted by
$\Gamma(x)$ or $\Gamma(a)$: two nodes $x,y$ are in the same orbit iff\footnote{Note that, for all $g \in \Gamma$, 
$\alpha(g)$ is an automorphism of $G$, that one can apply to both nodes and arcs.} $\alpha(g)(x)=y$ for some $g \in \Gamma$, and
similarly for arcs. The so-called \emph{quotient graph} $\quot G \alpha$ whose nodes and arcs are the orbits of nodes and
arcs under the action of $\alpha$. 

If you look at Figure~\ref{fig:intuition-example-colored}, there is a natural action $\alpha: S_6 \to \Aut G$ where $S_6$ is the
symmetric group with six generators (i.e., the group of all permutations of the set $\set{1,2,3,4,5,6}$):
every permutation $\pi \in S_6$ is associated with an automorphism that maps node $x_i$ to node $x_{\pi(i)}$,
and arc $x_i\to y_j$ to $x_{\pi(i)} \to y_j$ (for all $j=1,2$); all the other nodes and arcs are left unchanged by $\alpha$.
%There is another natural action $\beta: S_2 \to \Aut G$ that permutes $y_1$ and $y_2$.
%The two actions together provide evidence of the fact that $\Aut G \cong S_2 \times S_6$. 

This observation is an example of a general and important relation between fibrations and automorphisms, that can be described
in terms of group actions.
The action of a specific element $g \in \Gamma$ gives a bijection between 
$G(-,x)$ and $G(-,\alpha(g)(x))$ that fulfills the requirements
of the local in-isomorphism property; thus, by Theorem~\ref{teo:lip},
$\alpha$ induces a fibration $\phi: G \to B_\alpha$, where $B_\alpha$ is a
graph having as nodes the node-orbits of $\alpha$ and as many arcs
from $\Gamma(x)$ to $\Gamma(y)$ as the arcs coming into an(y) element
of $\Gamma(y)$ from elements of $\Gamma(x)$. We say that $\phi$ is
\emph{associated with $\alpha$}.  
Note that $\phi$ is in general not unique, as it depends, for every $x$ and
$y$, on the element of $\Gamma$ that is chosen to induce the local
in-isomorphism between $G(-,x)$ and $G(-,y)$.

\medskip
But not all fibrations are associated with group actions, and this is the crucial observation here:
looking again at Figure~\ref{fig:intuition-example-colored}, no action can move node $w$ (because automorphisms
preserve indegree and outdegree, and $w$ is the only node with outdegree zero), and as a consequence no action can exchange $y_1$ and
$y_2$. The fact that a \emph{fibration} identifies the two nodes $y_1$ and $y_2$ shows that fibrations give a stronger (and more
robust) definition of symmetry.

As an even more evident example, consider the 
graph $G$ on the left of Figure~\ref{fig:rigid-example}: $G$ is rigid (i.e., it has a trivial automorphism group), hence no group 
acts nontrivially on it. The reason behind this, intuitively, is the presence of the arc $a_7$ that disrupts the graph symmetry.
Nonetheless, the graph is fibred nontrivially as shown in the picture.

\begin{figure}
\centering
\begin{tabular}{ccc}
\begin{tikzpicture}[scale=0.7, every node/.style={scale=0.9}]
        % vertices
        \node[circle,draw,fill=purple] (v6) at  (1,0) {$x_6$};
        \node[circle,draw,fill=yellow] (v4) at  (0,1.5) {$x_4$};
        \node[circle,draw,fill=yellow] (v5) at  (2,1.5) {$x_5$};
        \node[circle,draw,fill=cyan] (v2) at  (0,3.5) {$x_2$};
        \node[circle,draw,fill=cyan] (v3) at  (2,3.5) {$x_3$};
        \node[circle,draw,fill=green] (v7) at  (4,3.5) {$x_7$};
        \node[circle,draw,fill=blue] (v1) at  (1,5) {$x_1$};
        %->s
        \draw[->] (v6) to node [left] {$a_1$} (v4);
        \draw[->] (v6) to node [right] {$a_2$} (v5);
        \draw[->] (v4) to node [left] {$a_3$} (v2);
        \draw[->] (v5) to node [right] {$a_4$}(v3);
        \draw[->] (v2) to[bend left] node [below] {$a_5$} (v3);
        \draw[->] (v3) to[bend left] node [below] {$a_6$}(v2);
        \draw[->] (v3) to node [above] {$a_7$} (v7);
        \draw[->] (v2) to node [left] {$a_8$} (v1);
        \draw[->] (v3) to node [right] {$a_9$} (v1);
\end{tikzpicture}
& $\qquad$ &
\begin{tikzpicture}[scale=0.9, every node/.style={scale=0.9}]
        % vertices
        \node[circle,draw,fill=purple] (y5) at  (1,0) {$y_5$};
        \node[circle,draw,fill=yellow] (y4) at  (1,1.5) {$y_4$};
        \node[circle,draw,fill=cyan] (y3) at  (1,3.5) {$y_3$};
        \node[circle,draw,fill=green] (y2) at  (4,3.5) {$y_2$};
        \node[circle,draw,fill=blue] (y1) at  (1,5) {$y_1$};
        %->s
        \draw[->] (y5) to node [left] {$b_1$} (y4);
        \draw[->] (y4) to node [left] {$b_2$} (y3);
        \draw[->,loop left] (y3) to node {$b_3$} (y3);
        \draw[->] (y3) to[bend right] node [right] {$b_4$} (y1);
        \draw[->] (y3) to[bend left] node [left] {$b_5$} (y1);
        \draw[->] (y3) to node [above] {$b_7$} (y2);
\end{tikzpicture}\\
$G$ && $B$
\end{tabular}
\caption{\label{fig:rigid-example}A rigid graph $G$, and a non-trivial fibration $\phi: G \to B$ (as usual, node colors are used to
suggest how node should be mapped, whereas the arc map is irrelevant as long as it respects the definition of graph homomorphism).}
\end{figure}
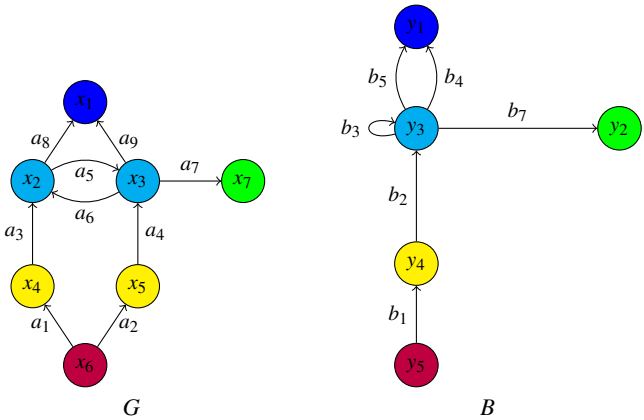

This fact corresponds to intuition: from the viewpoint of incoming signals there is no difference between $x_2$ and $x_3$: they are
bound to behave alike, because their view of the system is the same (in fact, the same as the view of $y_3$ in $B$).

Even in the case of symmetric graphs (graphs that can be thought of as undirected) fibrations generalize automorphisms:
for example, any $k$-regular graph  (i.e., any symmetric graph whose nodes have exactly $k$ incoming arcs paired with $k$ reverse outgoing arcs) 
with $n$ nodes can be fibred over a $k$-bouquet (a one-node graph with $k$ loops): in fact, there are $k!^n$ different ways to do that (any in-neighborhood
can be fibred using any bijection). Nonetheless, there exist rigid $k$-regular graphs---if $k\geq 3$, almost all $k$-regular graphs are
rigid!
So fibrations do describe a \emph{proper generalization} of automorphisms.

 \section{Universal fibrations and minimal fibrations}
\label{sec:ufc}

To proceed with our discussion, let us define in-trees.
An \emph{in-tree} is a graph with a selected node $r$, the root, and such that
every other node has exactly one directed path to the root; if $t$ is a node of
an in-tree, we sometimes use $t \to r$ for denoting the unique path from $t$ to
the root. If $T$ is an in-tree, we write $h(T)$ for its \emph{height} (the
length of the longest path). Finally, we write $T\upharpoonright k$ for the tree $T$
truncated at height $k$, that is, we eliminate all nodes at distance greater than
$k$ from the root.  
% A \emph{(symmetric) tree} is a (symmetric) graph with a
% selected node, the root, such that there is exactly one nonstuttering walk
% (path) from any node to the root: the notions of height and truncation carry on
% to this case. 
Unless otherwise stated, morphisms between trees are required to
preserve the root.

\paragraph{Input tree or universal total graphs (a.k.a.~views).}
Now given a graph $G$ and one of its nodes $x$, we define a possibly infinite in-tree $\view Gx$ called
the \emph{universal total graph of $G$ at $x$} (or, simply, the ``view of $x$ in $G$'') as follows:
\begin{itemize}
   \item the nodes of $\view Gx$ are the finite paths
   of $G$ ending in $x$;
   \item there is an arc from the node $\pi$ to the node $\pi'$ iff $\pi=a
   \pi'$ for some arc $a$ (if $G$ is coloured, then the arc gets the same
   colour as $a$).
\end{itemize}
The view is essentially an unrolling of all the paths of $G$ ending at $x$. Figure~\ref{fig:views} shows
the views of all the nodes of the graph $G$ in Figure~\ref{fig:rigid-example} (left). 

\begin{figure}
\centering
    \begin{tabular}{c|c}
    \hline
    $\view G{x_1}$ &
    \scalebox{.3}{
        \begin{tikzpicture}
        \tikzset{vertex/.style = {shape=circle,draw}}
        \tikzset{edge/.style = {->,> = latex'}}
        % vertices
        \node[] () at (6,6.3) {}; % space at the top
        \node[vertex] (v) at (8,6) {};
        \node[vertex] (v3) at  (4,5) {};
        \node[vertex] (v23) at  (2,4) {};
        \node[vertex] (v53) at  (7,4) {};
        \node[vertex] (v423) at  (0,3) {};
        \node[vertex] (v323) at  (3,3) {};
        \node[vertex] (v653) at  (7,3) {};
        \node[vertex] (v6423) at  (0,2) {};
        \node[vertex] (v2323) at  (2,2) {};
        \node[vertex] (v5323) at  (5,2) {};
        \node[vertex] (v42323) at  (1,1) {};        
        \node[vertex] (v32323) at  (3,1) {};        
        \node[vertex] (v65323) at  (5,1) {};        
        \node (vd42323) at  (1,0) {$\vdots$};        
        \node (vd32323) at  (3,0) {$\vdots$};    
        \node[vertex] (v2) at  (12,5) {};
        \node[vertex] (v32) at  (10,4) {};
        \node[vertex] (v42) at  (15,4) {};
        \node[vertex] (v532) at  (8,3) {};
        \node[vertex] (v232) at  (11,3) {};
        \node[vertex] (v642) at  (15,3) {};
        \node[vertex] (v6532) at  (8,2) {};
        \node[vertex] (v3232) at  (10,2) {};
        \node[vertex] (v4232) at  (13,2) {};
        \node[vertex] (v53232) at  (9,1) {};  
        \node[vertex] (v23232) at  (11,1) {};        
        \node[vertex] (v64232) at  (13,1) {};        
        \node (vd53232) at  (9,0) {$\vdots$};        
        \node (vd23232) at  (11,0) {$\vdots$};                
        %edges
        \draw[edge] (v2) to (v);
        \draw[edge] (v3) to (v);
        \draw[edge] (v23) to (v3);
        \draw[edge] (v53) to (v3);        
        \draw[edge] (v423) to (v23);
        \draw[edge] (v323) to (v23);
        \draw[edge] (v653) to (v53);
        \draw[edge] (v6423) to (v423);
        \draw[edge] (v2323) to (v323);
        \draw[edge] (v5323) to (v323);
        \draw[edge] (v42323) to (v2323);
        \draw[edge] (v32323) to (v2323);
        \draw[edge] (v65323) to (v5323);
        \draw[edge] (vd42323) to  (v42323);
        \draw[edge] (vd32323) to  (v32323);
        \draw[edge] (v32) to (v2);
        \draw[edge] (v42) to (v2);        
        \draw[edge] (v532) to (v32);
        \draw[edge] (v232) to (v32);
        \draw[edge] (v642) to (v42);
        \draw[edge] (v6532) to (v532);
        \draw[edge] (v3232) to (v232);
        \draw[edge] (v4232) to (v232);
        \draw[edge] (v53232) to (v3232);
        \draw[edge] (v23232) to (v3232);
        \draw[edge] (v64232) to (v4232);
        \draw[edge] (vd53232) to  (v53232);
        \draw[edge] (vd23232) to  (v23232);
        \end{tikzpicture}}
    \\
    \hline
    $\view G{x_2} \simeq \view G{x_3}$ &
    \scalebox{.3}{
        \begin{tikzpicture}
        \tikzset{vertex/.style = {shape=circle,draw}}
        \tikzset{edge/.style = {->,> = latex'}}
        % vertices
        \node[] () at (4,5.3) {}; % space at the top
        \node[vertex] (v3) at  (4,5) {};
        \node[vertex] (v23) at  (2,4) {};
        \node[vertex] (v53) at  (7,4) {};
        \node[vertex] (v423) at  (0,3) {};
        \node[vertex] (v323) at  (3,3) {};
        \node[vertex] (v653) at  (7,3) {};
        \node[vertex] (v6423) at  (0,2) {};
        \node[vertex] (v2323) at  (2,2) {};
        \node[vertex] (v5323) at  (5,2) {};
        \node[vertex] (v42323) at  (1,1) {};        
        \node[vertex] (v32323) at  (3,1) {};        
        \node[vertex] (v65323) at  (5,1) {};        
        \node (vd42323) at  (1,0) {$\vdots$};        
        \node (vd32323) at  (3,0) {$\vdots$};                
        %edges
        \draw[edge] (v23) to (v3);
        \draw[edge] (v53) to (v3);        
        \draw[edge] (v423) to (v23);
        \draw[edge] (v323) to (v23);
        \draw[edge] (v653) to (v53);
        \draw[edge] (v6423) to (v423);
        \draw[edge] (v2323) to (v323);
        \draw[edge] (v5323) to (v323);
        \draw[edge] (v42323) to (v2323);
        \draw[edge] (v32323) to (v2323);
        \draw[edge] (v65323) to (v5323);
        \draw[edge] (vd42323) to  (v42323);
        \draw[edge] (vd32323) to  (v32323);
        \end{tikzpicture}
    }
    \\
    \hline
    $\view G{x_4} \simeq \view G{x_5}$ &
    \scalebox{.3}{
        \begin{tikzpicture}
        \tikzset{vertex/.style = {shape=circle,draw}}
        \tikzset{edge/.style = {->,> = latex'}}
        % vertices
        \node[] () at (0,1.3) {}; % space at the top
        \node[vertex] (v4) at  (0,1) {};
        \node[vertex] (v64) at  (0,0) {};
        %edges
        \draw[edge] (v64) to (v4);
        \end{tikzpicture}
    }
    \\
        \hline
    $\view G{x_6}$ &
    \scalebox{.3}{
        \begin{tikzpicture}
        \tikzset{vertex/.style = {shape=circle,draw}}
        \tikzset{edge/.style = {->,> = latex'}}
        % vertices
        \node[] () at (0,0.3) {}; % space at the top
        \node[vertex] (v6) at  (0,0) {};
        \end{tikzpicture}
    }
    \\
    \hline
    $\view G{x_7}$ &
    \scalebox{.3}{
        \begin{tikzpicture}
        \tikzset{vertex/.style = {shape=circle,draw}}
        \tikzset{edge/.style = {->,> = latex'}}
        % vertices
        \node[] () at (4,6.3) {}; % space at the top
        \node[vertex] (v7) at (4,6) {};
        \node[vertex] (v3) at  (4,5) {};
        \node[vertex] (v23) at  (2,4) {};
        \node[vertex] (v53) at  (7,4) {};
        \node[vertex] (v423) at  (0,3) {};
        \node[vertex] (v323) at  (3,3) {};
        \node[vertex] (v653) at  (7,3) {};
        \node[vertex] (v6423) at  (0,2) {};
        \node[vertex] (v2323) at  (2,2) {};
        \node[vertex] (v5323) at  (5,2) {};
        \node[vertex] (v42323) at  (1,1) {};        
        \node[vertex] (v32323) at  (3,1) {};        
        \node[vertex] (v65323) at  (5,1) {};        
        \node (vd42323) at  (1,0) {$\vdots$};        
        \node (vd32323) at  (3,0) {$\vdots$};                
        %edges
        \draw[edge] (v3) to (v7);
        \draw[edge] (v23) to (v3);
        \draw[edge] (v53) to (v3);        
        \draw[edge] (v423) to (v23);
        \draw[edge] (v323) to (v23);
        \draw[edge] (v653) to (v53);
        \draw[edge] (v6423) to (v423);
        \draw[edge] (v2323) to (v323);
        \draw[edge] (v5323) to (v323);
        \draw[edge] (v42323) to (v2323);
        \draw[edge] (v32323) to (v2323);
        \draw[edge] (v65323) to (v5323);
        \draw[edge] (vd42323) to  (v42323);
        \draw[edge] (vd32323) to  (v32323);
        \end{tikzpicture}
    }
    \\\hline
    \end{tabular}
\caption{\label{fig:views}The views of all nodes of the graph $G$ in Figure~\ref{fig:rigid-example} (left), grouped when
they are isomorphic. Note that almost all of them are infinite in-trees (except for three nodes, that have only finitely many
incoming paths).}
\end{figure}
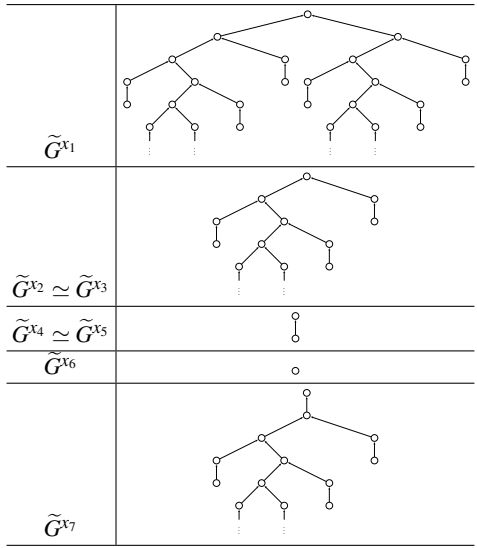

\paragraph{Nodes with the same view.}
We shall be interested in identifying the nodes of a graph $G$ sharing isomorphic
views. At this point, the reader may wonder whether and how this equivalence relation
between nodes can be made effective.
  
Let us start by providing a more humble equivalence relations
$\simeq_k$ on the nodes, defined by letting $x\simeq_k y$ iff $\view
Gx\upharpoonright k\cong\view Gy\upharpoonright k$, that is,
isomorphic only up to a level $k$ in the input tree.

\begin{teo}[See~\cite{NorUCG}]
\label{teo:norris}
  If $G$ has $n$ nodes, for all nodes $x,y$, $\view Gx\cong\view Gy$ iff $\view
  Gx\upharpoonright (n-1)\cong\view Gy\upharpoonright(n-1)$, that is, iff there is an isomorphism between the
  first $n-1$ levels of the two trees. 
\end{teo}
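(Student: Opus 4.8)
The plan is to prove both implications, dispatching the easy one first and reducing the converse to two separable phenomena: a combinatorial stabilization of the refinement sequence $\simeq_0 \supseteq \simeq_1 \supseteq \cdots$, and a passage to the limit from finite truncations to the full (possibly infinite) views. The forward implication is immediate, since any root-preserving isomorphism $\view Gx \cong \view Gy$ restricts to an isomorphism of the first $n-1$ levels, giving $\view Gx\upharpoonright(n-1) \cong \view Gy\upharpoonright(n-1)$. Everything below concerns the converse.

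The first key step is to record the recursive structure of views. Deleting the root of $\view Gx$ leaves a forest whose trees are indexed by the arcs $a \in G(-,x)$, and the tree hanging below $a$ is isomorphic to $\view G{s(a)}$ (via the map stripping the final arc from each path). Consequently $\view Gx\upharpoonright(k+1) \cong \view Gy\upharpoonright(k+1)$ holds iff there is a (colour-preserving) bijection $\psi: G(-,x) \to G(-,y)$ with $s(a) \simeq_k s(\psi(a))$ for every $a \in G(-,x)$. In other words, $\simeq_{k+1}$ is obtained from $\simeq_k$ by a single application of a refinement operator $R$ that depends only on the partition $\simeq_k$ and on $G$ --- precisely the Local In-Isomorphism Property relativised to $\simeq_k$. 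This shows at once that $\simeq_{k+1}$ refines $\simeq_k$.

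From $\simeq_{k+1} = R(\simeq_k)$ the stabilization is routine: if $\simeq_k = \simeq_{k+1}$ for some $k$, applying $R$ gives $\simeq_{k+1} = \simeq_{k+2}$, and by induction $\simeq_k = \simeq_m$ for all $m \ge k$. Now consider the first $n-1$ steps, from $\simeq_0$ to $\simeq_{n-1}$. Either every step is strict, in which case the number of classes grows from at least $1$ to at least $n$, forcing all classes to be singletons and hence $\simeq_{n-1}$ to be stable; or some step $\simeq_k = \simeq_{k+1}$ with $k \le n-2$ repeats, triggering stabilization from level $k$ onward. In both cases $\simeq_{n-1} = \simeq_m$ for all $m \ge n-1$. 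Since $\simeq_{n-1}$ also refines every coarser $\simeq_k$ with $k \le n-1$, we conclude that $x \simeq_{n-1} y$ implies $x \simeq_k y$ for all $k$; that is, all finite truncations of $\view Gx$ and $\view Gy$ are isomorphic.

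It remains to upgrade ``all finite truncations isomorphic'' to ``the full views isomorphic,'' and this limit passage is the main obstacle, as a view may be an infinite in-tree. Here I would invoke K\"onig's Lemma. For each $k$ let $I_k$ be the set of root-preserving isomorphisms $\view Gx\upharpoonright k \to \view Gy\upharpoonright k$; each $I_k$ is finite (the truncations are finite, since $G$ is finite) and nonempty (by the previous paragraph), and restriction to level $k$ defines a map $I_{k+1} \to I_k$. Organising the $I_k$ into a tree, joining each level-$(k+1)$ isomorphism to its level-$k$ restriction, produces an infinite, finitely branching tree; K\"onig's Lemma yields an infinite branch, i.e. a coherent family $(\phi_k)_k$ with $\phi_{k+1}$ restricting to $\phi_k$. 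Their union is a well-defined bijection on nodes and arcs that preserves source, target, colour and root, hence an isomorphism $\view Gx \cong \view Gy$. The only delicate points are verifying that $R$ is genuinely a function of the partition alone (so that stabilization propagates) and that the coherent family glues to a bona fide graph isomorphism; the counting bound itself is elementary.
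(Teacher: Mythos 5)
Your proof is correct, but it is worth noting that the paper itself offers no proof of this statement: it is quoted from Norris~\cite{NorUCG}, and the refinement-operator mechanism you build is essentially the standard argument behind that citation, which the paper echoes (without proof) in its description of the Cardon--Crochemore rounds in Section~\ref{sec:ufc}. All three ingredients of your converse check out: the recursive decomposition of $\view Gx$ into the subtrees $\view G{s(a)}$ for $a \in G(-,x)$ does show that $\simeq_{k+1}=R(\simeq_k)$ for an operator $R$ depending only on the partition; the counting argument correctly forces $\simeq_{n-1}=\simeq_m$ for all $m\ge n-1$ (a strict refinement adds at least one class, and the discrete partition is a fixed point because $\simeq_{k+1}$ always refines $\simeq_k$ by truncation); and K\"onig's Lemma legitimately upgrades isomorphism of all finite truncations to isomorphism of the full views, since each truncation is finite and the trees are finitely branching, both consequences of $G$ being finite. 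One remark: the limit step can be done without compactness by using machinery the paper already has. Once $\simeq_{n-1}$ is stable, it satisfies the Local In-Isomorphism Property, so by Theorem~\ref{teo:lip} it is the fibre partition of a surjective fibration $\phi: G \to B$; unique path lifting along $\phi$ then puts the paths of $G$ ending at $x$ and those ending at $y$ in bijection with the paths of $B$ ending at $\phi(x)=\phi(y)$, yielding an explicit isomorphism $\view Gx \cong \view Gy$. Your route buys self-containedness (it never invokes fibrations); the lifting route buys a constructive isomorphism and shortens the argument within the paper's framework.
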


\paragraph{Minimum base and minimal fibrations.}
It is worth noticing that every fibration of a graph ``smashes together'' some
nodes that possess the same universal total graph. Formally stated, if 
$\phi:G \to B$ is a fibration, and $\phi(x)=\phi(y)$, then $\view Gx\cong
\view Gy$.

It is natural to ask whether it is possible to take this process to
extremes and identify any two nodes having the same universal total
graph. Not only this is possible, but what we obtain is the unique (up
to isomorphism) smallest base over which $G$ can be fibred.  The
fibration itself is also uniquely defined (up to composition with an
isomorphism) on the nodes, whereas not necessarily so on the
arcs (\cite{BoVGF}, Theorem 30). Formally,
we can summarize the key properties we need that were obtained in~\cite{BoVGF} as follows:

\begin{teo}
  A graph $G$ is \emph{fibration prime} iff every surjective fibration $G \to H$ is an isomorphism.
  For every graph $G$, there exists a graph $\hat G$ (called the \emph{minimum base of $G$}) such that:
  \begin{itemize}
    \item $\hat G$ is fibration prime;
    \item there is a surjective fibration $\mu: G \to \hat G$;
    \item if $G$ can be surjectively fibred over a fibration prime
      graph $B$, then $B\cong \hat G$; moreover if $\iota: B \to \hat
      G$ is an isomorphism, then for all $x \in N_G$,
      $\mu(x)=\iota(\phi(x))$.
  \end{itemize} 
\end{teo}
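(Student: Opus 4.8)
The plan is to construct $\hat G$ as the quotient of $G$ by the relation of \emph{having isomorphic views}, and to derive both primeness and uniqueness from the single structural fact that a fibration $\phi\colon G\to B$ induces an isomorphism $\view Gx\cong\view B{\phi(x)}$ for every node $x$ (this is exactly the path-lifting characterization recalled after Definition~\ref{def:fib}: lifting sets up a bijection between the paths of $G$ ending at $x$ and the paths of $B$ ending at $\phi(x)$, and this bijection is a tree isomorphism). First I would define $x\approx y$ iff $\view Gx\cong\view Gy$ and check that $\approx$ satisfies the Local In-Isomorphism Property: a root-preserving isomorphism $\view Gx\to\view Gy$ restricts at the first level to a bijection $G(-,x)\to G(-,y)$, and because it carries the subtree hanging below an incoming arc $a$ (which is $\view G{s(a)}$) isomorphically onto the subtree below its image $a'$ (which is $\view G{s(a')}$), we get $s(a)\approx s(a')$. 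Applying part~\ref{enu:lipfib} of Theorem~\ref{teo:lip} to $\approx$ then yields a graph $\hat G$ together with a surjective fibration $\mu\colon G\to\hat G$ whose fibres are exactly the $\approx$-classes, establishing the first two bullet points.

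For the third bullet I would first isolate the key lemma that \emph{a graph is fibration prime iff distinct nodes have non-isomorphic views}. The forward direction is the contrapositive of the construction just given applied to the graph itself: if two distinct nodes shared a view, the relation $\approx$ on that graph would be nontrivial and (by the same Local In-Isomorphism check) would induce a surjective fibration that is not injective on nodes, hence not an isomorphism. A useful byproduct is that a fibration-prime graph is \emph{node-rigid}: any automorphism preserves views and therefore fixes every node. To see that $\hat G$ is prime, I would use that $\mu$ preserves views, so the view of a node $\mu(x)$ of $\hat G$ equals $\view Gx$; since the nodes of $\hat G$ index \emph{distinct} $\approx$-classes, i.e.\ distinct views, the lemma applies. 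Equivalently and more directly: for any surjective fibration $\psi\colon\hat G\to H$, the smashing fact applied to $\psi$ gives $\view{\hat G}{\mu(x)}\cong\view{\hat G}{\mu(y)}$ whenever $\psi(\mu(x))=\psi(\mu(y))$, and view-preservation of $\mu$ turns this into $\view Gx\cong\view Gy$, hence $\mu(x)=\mu(y)$; so $\psi$ is injective on nodes, and unique path-lifting upgrades this to injectivity on arcs, making $\psi$ an isomorphism.

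For uniqueness, suppose $\phi\colon G\to B$ is a surjective fibration onto a fibration-prime $B$. I would show $\phi$ and $\mu$ induce the \emph{same} partition of $N_G$: the smashing fact gives $\phi(x)=\phi(y)\Rightarrow x\approx y$, while view-preservation of $\phi$ together with the lemma (distinct nodes of $B$ have distinct views) gives the converse $x\approx y\Rightarrow\phi(x)=\phi(y)$. Thus $\iota_N(\phi(x))=\mu(x)$ is a well-defined bijection $N_B\to N_{\hat G}$. To promote it to an isomorphism, I would count arcs: for a fibration the number of arcs landing in a node $b$ from a given fibre equals, by the lifting property, the number of arcs into any fixed $x\in\phi^{-1}(b)$ from the corresponding $\approx$-class of $G$; since $B$ and $\hat G$ share the node partition, these counts agree class-by-class, and choosing any bijection of arcs compatible with sources and targets yields an isomorphism $\iota\colon B\to\hat G$ extending $\iota_N$ with $\mu(x)=\iota(\phi(x))$. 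Finally, node-rigidity forces the node map of \emph{every} isomorphism $B\to\hat G$ to coincide with $\iota_N$, which gives the stated ``moreover''.

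The main obstacle I anticipate is the uniqueness step, specifically the arc-level bookkeeping: I must verify that the arc multiplicities between corresponding classes are genuinely determined by the common node partition (this is where the fibration property, rather than mere homomorphism, is essential), while simultaneously conceding that the arc component of $\iota$ is \emph{not} canonical---only its node component is---which is precisely why the theorem asserts uniqueness on the nodes but not on the arcs.
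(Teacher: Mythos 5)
Your proposal is correct, and it follows essentially the same route as the paper: the paper does not actually prove this theorem (it cites it from~\cite{BoVGF}), but the construction it sketches around the statement---quotienting $G$ by the relation ``isomorphic views'' via Theorem~\ref{teo:lip}, which is exactly what the set-partition algorithm described afterwards computes---is precisely your construction. Your filled-in details (the local in-isomorphism check for the isomorphic-views relation, the lemma that fibration-prime is equivalent to all views being distinct, view-preservation under fibrations, the arc-counting via unique lifting, and node-rigidity for the ``moreover'' clause) are all sound and match the argument in the cited reference.
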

There is a set-partition algorithm that computes the minimum base
$\hat G$ of $G$ and a minimal fibration $\mu: G \to \hat G$.  It partitions the
graph into classes of nodes having the same universal total graph. To
do so, it works in $|N_G|-1$ rounds: the partition associated to round
$k$ is the one induced by the equivalence relation $\simeq_k$ (i.e.,
two nodes are in the same class iff they share the first $k$ levels of
their universal total graphs). By Theorem~\ref{teo:norris},  
after the last round two nodes are in the same class iff they
have the same universal total graphs.

At round $0$, all nodes are in the same class, because $\simeq_0$ is
the total relation. To build $\simeq_{k+1}$, just note that $x
\simeq_{k+1} y$ iff $x \simeq_k y$ and there is a bijection
$\psi:G(-,x)\to G(-,y)$ such that $s(a)\simeq_k s(b)$ and
$t(a)\simeq_k t(b)$. 

If we think of classes as a node colours, the overall process defines
a sequence of node colourings.
In the colouring at round $0$, all nodes have the same colour; 
at round $k+1$, we assign
the same colour to two nodes $x$ and $y$ iff the colours  of in-neighbours of $x$ and
$y$ at the $k$-th round were the same, and with the same multiplicity (i.e., for every colour $c$, $x$ has as many in-neighbours of
colour $c$ at round $k$ as $y$).

In fact, the graph $B$ in Figure~\ref{fig:rigid-example} is the
minimum base $\hat G$ of $G$, and it is built conceptually by
identifying the nodes of $G$ that have isomorphics views (see
Figure~\ref{fig:views}).

\medskip
From now on, we use $\mineq_G$ to denote the equivalence relation between nodes induced (according to Theorem~\ref{teo:lip}) by (any) 
minimal fibration $G \to \hat G$. In other words, we write $x\mineq_G y$ to mean that $\view Gx\cong\view Gy$ or that, equivalently,
$x$ and $y$ are mapped to the same node of the minimum base. 
The equivalence relation $\mineq_G$ is the coarsest equivalence relation among the nodes of $G$ satisfying the local in-isomorphism property.
The problem of finding $\mineq_G$ is known in the literature as the \emph{color refinement} (or \emph{naive vertex classification}) problem~\cite{berkholz2017tight},
and is much studied because it is a fundamental block in every efficient graph isomorphism procedure.
Most efficient techniques to compute $\mineq_G$ are generalizations of Hopcroft's finite-state automata minimization algorithm~\cite{hopcroft1971n};
in particular, the first $O(|A|\log |N|)$ algorithm to compute $\mineq_G$ is due to Cardon and Crochemore~\cite{cardon1982} (later
simplified in~\cite{paige1987three} and applied to information-processing networks in~\cite{monteiroAlgorithm}), and for this reason we shall often refer to $\mineq_G$ as the Cardon-Crochemore equivalence of $G$.

\section{Unveiling and Reconstructing Symmetries}
\label{sec:tools}

\subsection{Overall Plan}

We are now ready to present and discuss the objective of the paper: we
have a known graph $G$ that is a noisy version of an unknown graph
$H$; $G$ and $H$ have the same nodes, but $G$ may contain some (a
few) extra arcs, and may lack some (a few) of the original arcs.  In
principle, we would like to recover $H$, although the process will
unavoidably be approximated.  

The starting assumption of the whole reconstruction is that the unknown graph $H$ is rich of symmetries, 
in the sense that $\mineq_H$ is nontrivial and has some large equivalence classes.
Unfortunately, $G$ is not as symmetrical, and $\mineq_G$ is in general much less representative of the hidden symmetries in the network. We will therefore proceed as follows:

\begin{enumerate}
  \item[] {\bf Step (1).} [Algorithm~\ref{algo:equiv-rel}] We will first find an equivalence relation $\sim$ on the nodes of $G$ that is as close as possible to $\mineq_H$; normally, $H$ is unknown,
and so is $\mineq_H$. Our baseline will be $\mineq_G$, that is, the fibration symmetry that remains in $G$ despite of the noise, but 
our approach will try to do a better job and reconstruct more symmetries that those that are left in $G$.
  \item[] {\bf Step (2).} [Algorithm~\ref{algo:build-qf}] Based on
    $\sim$, we shall find a base graph $B$ and a homomorphism $\xi: G
    \to B$ whose fibres are exactly the equivalence classes of
    $\sim$. The homomorphism $\xi$ will be surjective, but not a
    fibration because of the presence of noise. Nonetheless it will be
    ``close to'' a fibration, in a technical sense that we will make
    more precise below. We will call $\xi$ a quasifibration.
  \item[] {\bf Step (3).} [Algorithm~\ref{algo:qf-to-fib}] Now we will turn the quasifibration $\xi: G \to B$ into an actual fibration $\xi': G' \to B$, where $G'$ is a minimally modified version
  of $G$ that allows to transform $\xi$ into a fibration.
  \item[] {\bf Step (4).} We compute the minimum base of $G'$ and a minimal fibration $\mu: G' \to \hat{G'}$; the equivalence relation $\mineq_{G'}$
  will provide the final fibration symmetry we are looking for. It is a coarsening of $\mineq_{G'}=\sim$.
\end{enumerate}

The final result will be the reconstructed ``symmetrical'' version $G'$ of $G$ (that aims at being $H$) along with a fibration $\xi': G' \to B$
that explains the symmetries.
 
The three steps above will be totally orthogonal: in particular, as we
will see, steps (2), (3) and (4) above will be performed in a provably
optimal way. However, determining $\sim$ in an optimal way (that is,
deriving $\mineq_H$ from $G$) is not generally possible, and depends
largely from how much noise was introduced and where. As a
consequence, we will approach step (1) based on some heuristics in
Algorithm 3 that we will discuss at the right time.
Since the first three steps above are independent of one another, we can describe them in any order. It will actually be easier to discuss them in 
reverse order, starting from the last and moving to the first.

\subsection{Step (4): Building the final fibration symmetry}

The role of step (4) may be unclear to the reader: the point is that, once we have de-noised $G$ obtaining $G'$, we want to build its
\emph{full} fibration symmetry, constructing a minimal fibration. Of course, the equivalence relation $\sim$ on the nodes of $G$ found
in step (1) satisfies the local in-isomorphism property on $G'$ (steps (2) and (3) serve precisely the purpose of adjusting
$G'$ so that $\sim$ becomes a local in-isomorphism), but it is not the coarsest one. That is why we finally compute the coarsest local
in-isomorphism on the nodes $G'$: of course, $\mineq_{G'}$ will  be a coarsening of $\sim$.

\subsection{Step (3): Quasifibrations and the {\scshape GraphRepair} algorithm}

Before discussing step (3), let us provide as promised a relaxed
version of the definition of fibration.  Recall that a fibration
requires that every arc of the base can be lifted to exactly one
incoming arc at every node of the fibre of its target.  We can easily
derive a measure of how close (or far) a given graph homomorphism is
from being a fibration.

For a given graph homomorphism $\xi: G \to B$, consider an arc $a\in A_B$
and a node $y \in N_G$ such that $\xi(y)=t(a)$: we may wonder if there
are arcs of $G$, with target $y$, that are image of $a$; in the light of Definition~\ref{def:fib}, these 
are all potential ``liftings of $a$ at $y$''. If $\xi$ is a fibration,
there must be exactly one lifting (because the definition of fibration entails existence
and uniqueness). Every time this set of arcs is empty (i.e., existence fails) or
contains more than one element (i.e., uniqueness fails), we have a witness of the fact that $\xi$ is not a fibration.
Formally:

 \begin{defi}
 \label{def:quasi-fib}
Given a graph homomorphism $\xi: G \to B$, for every arc $a\in A_B$
and every node $y \in N_G$ such that $\xi(y)=t(a)$, let us define the
\emph{local delta function} as the number of liftings minus 1:
\[
	\delta_\xi(a, y) = \left|\set{a' \in A_G \mid \xi(a')=a \text{ and } t(a')=y}\right|-1.
\]
We let
\begin{eqnarray*}
	\Delta_\xi^+ &=& \sum_{a,y} \max(\delta_\xi(a, y), 0) \quad\text{the \emph{excess} of $\xi$}\\ 
	\Delta_\xi^- &=& \sum_{a,y} \max(-\delta_\xi(a, h), 0) \quad\text{the \emph{deficiency} of $\xi$}\\
	\Delta_\xi 
	 &=& \sum_{a,y} \left|\delta_\xi(a,y)\right| =\\
	 &=& \Delta_\xi^+ + \Delta_\xi^- \quad\text{the \emph{total error} of $\xi$,}
\end{eqnarray*}
where all summations range over all  possible pairs $a \in A_B$ and $y \in \xi^{-1}(t_B(a))$.
We say that $\xi$ is a \emph{$(\Delta_\xi^-,\Delta_\xi^+)$-quasifibration} or just a $\Delta_\xi$-quasifibration.
 \end{defi}

If $\xi$ is a fibration, the function $\delta_\xi$ is constantly zero
(because every arc can be lifted in a unique way).  For a homomorphism
that fails to be a fibration, $\delta_\xi(a,y)=-1$ when $a$ cannot be
lifted at $x$; $\delta_\xi(a,y)>1$ when $a$ can be lifted in more than
one way at $y$.

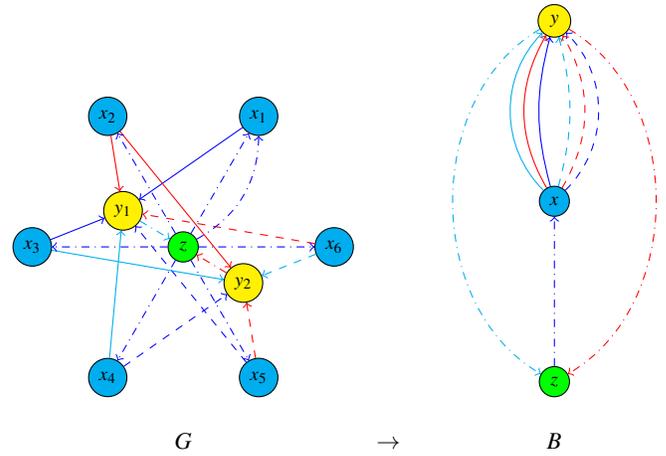
\begin{figure}
\centering
\begin{tabular}{ccc}
\begin{tikzpicture}[scale=0.8, every node/.style={scale=0.8}]
	\node[regular polygon, regular polygon sides=6, minimum size=5cm] at (0,0) (A) {};
	\foreach \i in {1,...,6}
		\node[circle, draw, fill=cyan] at (A.corner \i) (X\i) {$x_{\i}$};
	\node[circle, draw, fill=yellow] at (-1,0.6) (L) {$y_1$};
	\node[circle, draw, fill=yellow] at (1,-0.6) (R) {$y_2$};
	\node[circle, draw, fill=green] at (0,0) (C) {$z$};

	\draw[->,blue] (X1) to (L);                                              % a1: x1 -> y1     (a1)
	\draw[->,red] (X2) to (L);                                               % a2: x2 -> y1     (a2)
	%\draw[->,cyan,dashed] (X3) to (L);                                      % a3: x3 -> y1     (a3)     MISSING (a3, y1)
	\draw[->,blue] (X3) to (L); %%%                                          % extra2: x3 -> y1 (a1)     EXTRA (a1,y1) -> a1, extra2
	\draw[->,cyan] (X4) to (L);                                              % a4: x4 -> y1     (a4)
	\draw[->,blue,dashed] (X5) to (L);                                       % a5: x5 -> y1     (a5)
	\draw[->,red,dashed] (X6) to (L);                                        % a6: x6 -> y1     (a6)

	%\draw[->,blue] (X1) to (R);                                             % b1: x1 -> y2     (a1)     MISSING (a1, y2)
	\draw[->,red] (X2) to (R);                                               % b2: x2 -> y2     (a2)
	\draw[->,cyan] (X3) to (R);                                              % b3: x3 -> y2     (a3)
	\draw[->,blue,dashed] (X4) to (R);                                       % b4: x4 -> y2     (a4)
	\draw[->,red,dashed] (X5) to (R);                                        % b5: x5 -> y2     (a5)
	\draw[->,cyan,dashed] (X6) to (R);                                       % b6: x6 -> y2     (a6)

	\draw[->,cyan,dashdotted] (L) to (C);                   % c1: y1 -> z	    (c1)
	\draw[->,red,dashdotted] (R) to (C);	                 % c2: y2 -> z      (c2)

	\draw[->,blue,dashdotted] (C) to (X1);				     % d1: z -> x1      (b)      EXTRA (b,x1) -> d1, extra1
	\draw[->,blue,dashdotted] (C) to[bend right] (X1); %%%  % extra1: z -> x1  (b)
	\draw[->,blue,dashdotted] (C) to (X2);				     % d2: z -> x2      (b)
	\draw[->,blue,dashdotted] (C) to (X3);				     % d3: z -> x3      (b)
	\draw[->,blue,dashdotted] (C) to (X4);				     % d4: z -> x4      (b)
	\draw[->,blue,dashdotted] (C) to (X5);				     % d5: z -> x5      (b)
	\draw[->,blue,dashdotted] (C) to (X6);				     % d6: z -> x6      (b)
\end{tikzpicture}
& $\qquad$ &
\begin{tikzpicture}[scale=0.8, every node/.style={scale=0.8}]
	\node[circle, draw, fill=green]  at (0,0) (Z) {$z$};
	\node[circle, draw, fill=cyan] at (0,3) (X) {$x$};
	\node[circle, draw, fill=yellow] at (0,6) (Y) {$y$};
	
	\draw[->,blue] (X) to[bend left=15] (Y);                                 % a1: x -> y	                             
	\draw[->,red] (X) to[bend left=30] (Y);	                                 % a2: x -> y
	\draw[->,cyan,dashed] (X) to[bend left=-15] (Y);	                     % a3: x -> y
	\draw[->,cyan] (X) to[bend left=45] (Y);                                 % a4: x -> y	
	\draw[->,blue,dashed] (X) to[bend left=-45] (Y);	                     % a5: x -> y
	\draw[->,red,dashed] (X) to[bend left=-30] (Y);	                         % a6: x -> y
	
	\draw[->,cyan,dashdotted] (Y) to[bend right=60] (Z);    % c1: y -> z
	\draw[->,red,dashdotted] (Y) to[bend left=60] (Z);	     % c2: y -> z
	
	\draw[->,blue,dashdotted] (Z) to (X);                   % b: z -> x
\end{tikzpicture} \\[1em]
$G$ & $\to$ & $B$
\end{tabular}
\caption{\label{fig:qf-example}An example of a quasifibration $\xi:
  G\to B$. Node colour and arc shape and colour are used to suggest
  how nodes and arcs are mapped. It is easy to check that this is a
  graph homomorphism.}
\end{figure}

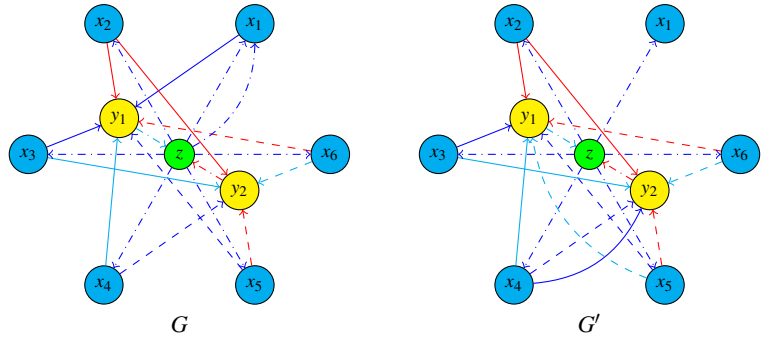
\begin{figure*}[ht!]
    \centering
    \begin{tabular}{ccc}
\begin{tikzpicture}[scale=0.8, every node/.style={scale=0.8}]
	\node[regular polygon, regular polygon sides=6, minimum size=5cm] at (0,0) (A) {};
	\foreach \i in {1,...,6}
		\node[circle, draw, fill=cyan] at (A.corner \i) (X\i) {$x_{\i}$};
	\node[circle, draw, fill=yellow] at (-1,0.6) (L) {$y_1$};
	\node[circle, draw, fill=yellow] at (1,-0.6) (R) {$y_2$};
	\node[circle, draw, fill=green] at (0,0) (C) {$z$};

	\draw[->,blue] (X1) to (L);                                              % a1: x1 -> y1     (a1)
	\draw[->,red] (X2) to (L);                                               % a2: x2 -> y1     (a2)
	%\draw[->,cyan,dashed] (X3) to (L);                                      % a3: x3 -> y1     (a3)     MISSING (a3, y1)
	\draw[->,blue] (X3) to (L); %%%                                          % extra2: x3 -> y1 (a1)     EXTRA (a1,y1) -> a1, extra2
	\draw[->,cyan] (X4) to (L);                                              % a4: x4 -> y1     (a4)
	\draw[->,blue,dashed] (X5) to (L);                                       % a5: x5 -> y1     (a5)
	\draw[->,red,dashed] (X6) to (L);                                        % a6: x6 -> y1     (a6)

	%\draw[->,blue] (X1) to (R);                                             % b1: x1 -> y2     (a1)     MISSING (a1, y2)
	\draw[->,red] (X2) to (R);                                               % b2: x2 -> y2     (a2)
	\draw[->,cyan] (X3) to (R);                                              % b3: x3 -> y2     (a3)
	\draw[->,blue,dashed] (X4) to (R);                                       % b4: x4 -> y2     (a4)
	\draw[->,red,dashed] (X5) to (R);                                        % b5: x5 -> y2     (a5)
	\draw[->,cyan,dashed] (X6) to (R);                                       % b6: x6 -> y2     (a6)

	\draw[->,cyan,dashdotted] (L) to (C);                   % c1: y1 -> z	    (c1)
	\draw[->,red,dashdotted] (R) to (C);	                 % c2: y2 -> z      (c2)

	\draw[->,blue,dashdotted] (C) to (X1);				     % d1: z -> x1      (b)      EXTRA (b,x1) -> d1, extra1
	\draw[->,blue,dashdotted] (C) to[bend right] (X1); %%%  % extra1: z -> x1  (b)
	\draw[->,blue,dashdotted] (C) to (X2);				     % d2: z -> x2      (b)
	\draw[->,blue,dashdotted] (C) to (X3);				     % d3: z -> x3      (b)
	\draw[->,blue,dashdotted] (C) to (X4);				     % d4: z -> x4      (b)
	\draw[->,blue,dashdotted] (C) to (X5);				     % d5: z -> x5      (b)
	\draw[->,blue,dashdotted] (C) to (X6);				     % d6: z -> x6      (b)
\end{tikzpicture}
& $\qquad$ &
\begin{tikzpicture}[scale=0.8, every node/.style={scale=0.8}]
	\node[regular polygon, regular polygon sides=6, minimum size=5cm] at (0,0) (A) {};
	\foreach \i in {1,...,6}
		\node[circle, draw, fill=cyan] at (A.corner \i) (X\i) {$x_{\i}$};
	\node[circle, draw, fill=yellow] at (-1,0.6) (L) {$y_1$};
	\node[circle, draw, fill=yellow] at (1,-0.6) (R) {$y_2$};
	\node[circle, draw, fill=green] at (0,0) (C) {$z$};

	%%%%%%\draw[->,blue] (X1) to (L);                                        % a1: x1 -> y1     (a1)            FIXING REMOVED
	\draw[->,red] (X2) to (L);                                               % a2: x2 -> y1     (a2)
	%\draw[->,cyan,dashed] (X3) to (L);                                      % a3: x3 -> y1     (a3)     MISSING (a3, y1)
	\draw[->,blue] (X3) to (L); %%%                                          % extra2: x3 -> y1 (a1)     EXTRA (a1,y1) -> a1, extra2
	\draw[->,cyan] (X4) to (L);                                              % a4: x4 -> y1     (a4)
	\draw[->,blue,dashed] (X5) to (L);                                       % a5: x5 -> y1     (a5)
	\draw[->,cyan,dashed] (X5) to[bend left] (L);                                       % new_arc_265: x5 -> y1 (a3)       FIXING ADDED
	\draw[->,red,dashed] (X6) to (L);                                        % a6: x6 -> y1     (a6)

	%\draw[->,blue] (X1) to (R);                                             % b1: x1 -> y2     (a1)     MISSING (a1, y2)
	\draw[->,blue] (X4) to[bend right] (R);    %%%%%%                                    % new_arc_864: x4 -> y2 (a1)       FIXING ADDED
	\draw[->,red] (X2) to (R);                                               % b2: x2 -> y2     (a2)
	\draw[->,cyan] (X3) to (R);                                              % b3: x3 -> y2     (a3)
	\draw[->,blue,dashed] (X4) to (R);                                       % b4: x4 -> y2     (a4)
	\draw[->,red,dashed] (X5) to (R);                                        % b5: x5 -> y2     (a5)
	\draw[->,cyan,dashed] (X6) to (R);                                       % b6: x6 -> y2     (a6)

	\draw[->,cyan,dashdotted] (L) to (C);                   % c1: y1 -> z	    (c1)
	\draw[->,red,dashdotted] (R) to (C);	                 % c2: y2 -> z      (c2)

	%%%%%%\draw[->,blue,dashdotted] (C) to (X1);			  % d1: z -> x1      (b)      EXTRA (b,x1) -> d1, extra1   FIXING REMOVED
	\draw[->,blue,dashdotted] (C) to (X1); %%%  % extra1: z -> x1  (b) 
	\draw[->,blue,dashdotted] (C) to (X2);				     % d2: z -> x2      (b)
	\draw[->,blue,dashdotted] (C) to (X3);				     % d3: z -> x3      (b)
	\draw[->,blue,dashdotted] (C) to (X4);				     % d4: z -> x4      (b)
	\draw[->,blue,dashdotted] (C) to (X5);				     % d5: z -> x5      (b)
	\draw[->,blue,dashdotted] (C) to (X6);				     % d6: z -> x6      (b)
\end{tikzpicture}\\
$G$ &\qquad & $G'$
\end{tabular}
    \caption{\label{fig:qf-example-fixed}The original graph $G$ (total graph of the quasifibration $\xi: G \to B$) of Figure~\ref{fig:qf-example} 
    and the adjusted version $G'$ obtained after the application of Algorithm~\ref{algo:qf-to-fib}; the fibration $\xi': G' \to B$ is
    the one suggested by node and arc colouring, as usual. The blue arc $x_1\to y_1$ and the snake blue
    arc $z \to x_1$ were removed. A new blue arc $x_2\to y_2$ and a new cyan dashed arc $x_5 \to y_1$ were added.} 
\end{figure*}

We look at a quasifibration $\xi: G \to B$ as a ``fibration with errors'', where the errors are due to the presence of noise in the arcs of the total graph $G$. In this sense, we may want to reconstruct the correct (de-noised) total graph and make $\xi$ into a fibration.

In order to do this, let us say that two graphs $G_1$ and $G_2$ are \emph{compatible} iff they have the same nodes (i.e., $N_{G_1}=N_{G_2}$) and common arcs have the same sources and targets (i.e., if $a \in A_{G_1} \cap A_{G_2}$ then $s_{G_1}(a)=s_{G_2}(a)$ and $t_{G_1}(a)=t_{G_2}(a)$). Given two compatible graphs, we write $G_1 \Delta G_2$ for the cardinality of the symmetric difference\footnote{$A \Delta B$ is the standard notation used for the symmetric difference of two sets (e.g., $\{\,1,2,3,4\,\} \Delta \{\,3,4,5\,\}=\{\,1,2,5\,\}$); since in this paper we only need its cardinality, we make an abuse of notation and use $A \Delta B$ for the cardinality of the difference.} $(A_{G_1}\setminus A_{G_2}) \cup (A_{G_2}\setminus A_{G_1})$.

\begin{algorithm}[H]
    \SetAlgoLined
    \LinesNumbered
    \SetKwInOut{Input}{Input}
    \SetKwInOut{Output}{Output}
    \Input{a quasifibration $\xi: G \to B$}
    \Output{a fibration $\xi': G' \to B$}
    $G' \leftarrow G$\;
    $\xi' \leftarrow \xi$\;
    
    \For{$a \in A_B$} {
        \For{$y \in \xi^{-1}(t_B(a))$} {
            $U_{a,y} \leftarrow \set{a' \in A_G \mid \xi(a')=a \text{ and } t(a')=y}$\;
            \eIf{$|U_{a,y}| \geq 1$}{
                remove from $A_{G'}$ all elements of $U_{a,y}$ except one\;
            }{
                add a new arc $a'$ to $G'$ with target $y$ and source chosen in $\xi^{-1}(s_B(a))$ \;
                $\xi'(a') \leftarrow a$\;
            }
        }
    }
    \Return{$\xi': G' \to B$}
    \caption{\label{algo:qf-to-fib}{\scshape GraphRepair}: This algorithm reconstructs a graph $G'$ and a fibration $\xi'$ from a quasifibration $\xi: G \to B$ with the properties described in Theorem~\ref{teo:qf-to-fib}.}
\end{algorithm}

\begin{theorem}
\label{teo:qf-to-fib}
Algorithm~\ref{algo:qf-to-fib}, given a surjective quasifibration $\xi: G \to B$, reconstructs a surjective fibration $\xi': G' \to B$ such that
\begin{enumerate}
  \item $G$ and $G'$ are compatible, and $G \Delta G'=\Delta_\xi$;
  \item $\xi'$ coincides with $\xi$ on all nodes and on common arcs;
  \item for any other graph $G''$ and fibration $\xi'': G'' \to B$ with the same two properties, the difference $G \Delta G''$ is
  at least $\Delta_\xi$.
\end{enumerate}
\end{theorem}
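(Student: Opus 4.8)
The plan is to extract a single structural observation that decouples the entire problem into independent per-pair subproblems. For each arc $a\in A_B$ and each node $y\in\xi^{-1}(t_B(a))$, write $U_{a,y}=\set{a'\in A_G \mid \xi(a')=a,\ t(a')=y}$, so that $|U_{a,y}|=\delta_\xi(a,y)+1$. First I would note that, since $\xi$ is a homomorphism, every arc $a'$ of $G$ lies in exactly one such set, namely the one indexed by $a=\xi(a')$ and $y=t(a')$ (a legal pair, because $\xi(t(a'))=t_B(\xi(a'))$). Hence the family $\set{U_{a,y}}$ \emph{partitions} $A_G$, the two nested loops of Algorithm~\ref{algo:qf-to-fib} never touch the same original arc twice (the algorithm reads $U_{a,y}$ from the fixed $A_G$), and the edits performed in distinct iterations are mutually independent. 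This observation is what makes the whole counting clean.

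For parts (1) and (2) I would analyse one iteration in isolation. When $|U_{a,y}|\geq 1$ the algorithm deletes $|U_{a,y}|-1=\max(\delta_\xi(a,y),0)$ arcs and keeps one; when $|U_{a,y}|=0$ (so $\delta_\xi(a,y)=-1$) it adds exactly one fresh arc, i.e.\ $\max(-\delta_\xi(a,y),0)$ of them. In both cases exactly one arc of $G'$ with target $y$ maps to $a$, and because every arc of $G'$ is either a surviving original arc or a fresh arc attached to a \emph{unique} pair $(a,y)$, this is precisely the existence-and-uniqueness of liftings of Definition~\ref{def:fib}; thus $\xi'$ is a fibration, surjective since fibres are unchanged (hence nonempty) and every $a\in A_B$ acquires a lifting. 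Summing over pairs, deletions number $\Delta_\xi^+$ and additions number $\Delta_\xi^-$; as deletions land in $A_G\setminus A_{G'}$ and additions in $A_{G'}\setminus A_G$, we obtain $G\Delta G'=\Delta_\xi^+ +\Delta_\xi^-=\Delta_\xi$. Compatibility (nodes are never altered, kept arcs retain their endpoints, fresh arcs are new) and the agreement of $\xi'$ with $\xi$ on nodes and common arcs are immediate, since $\xi'$ is only ever \emph{extended} on fresh arcs.

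The main work is the optimality claim (3), and the idea is to run the same partition argument on the competitor. Let $\xi'':G''\to B$ be any fibration with $G''$ compatible with $G$ and $\xi''$ agreeing with $\xi$ on nodes and common arcs, and let $U''_{a,y}$ denote the liftings of $a$ at $y$ in $G''$. As before $\set{U''_{a,y}}$ partitions $A_{G''}$, and the fibration property forces $|U''_{a,y}|=1$ for every legal pair. The decisive step is that a \emph{common} arc which is a lifting of $a$ at $y$ in $G''$ is automatically one in $G$ (its image and target are unchanged by compatibility and by agreement on common arcs), so $U_{a,y}\cap A_{G''}\subseteq U''_{a,y}$ and therefore $|U_{a,y}\cap A_{G''}|\leq 1$. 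Per pair this yields: at least $\max(\delta_\xi(a,y),0)$ original liftings must be deleted, and when $U_{a,y}=\emptyset$ the unique element of $U''_{a,y}$ must be a fresh arc, contributing at least $\max(-\delta_\xi(a,y),0)$ to $A_{G''}\setminus A_G$. Summing over pairs gives $|A_G\setminus A_{G''}|\geq\Delta_\xi^+$ and $|A_{G''}\setminus A_G|\geq\Delta_\xi^-$, hence $G\Delta G''\geq\Delta_\xi$.

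I expect the summation in this last paragraph to be the only delicate point: the per-pair bounds may be added only because they count disjoint arcs, and this is guaranteed precisely by the two partition statements (for $A_G$ and for $A_{G''}$) together with the containment $U_{a,y}\cap A_{G''}\subseteq U''_{a,y}$, which makes the correspondence between $U_{a,y}$ and $U''_{a,y}$ respect common arcs. Everything else reduces to the bookkeeping of a single iteration.
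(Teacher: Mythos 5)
Your proof is correct and follows essentially the same route as the paper's: a per-pair (arc $a\in A_B$, node $y$ in the fibre of its target) analysis of liftings, in which excess arcs are deleted, missing liftings are added, and optimality follows because any competing fibration must also have exactly one lifting per pair. If anything, your partition-of-$A_G$ observation together with the containment $U_{a,y}\cap A_{G''}\subseteq U''_{a,y}$ makes rigorous the optimality step that the paper only sketches (its assertion that a smaller difference would force ``one less added arc or one more deleted arc''), so your write-up is a more complete rendering of the same argument.
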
 
\begin{proof}
Algorithm 1 precisely determines for every arc $a$ of the base graph
and every $x$ in the counterimage of its target, if the existence and
uniqueness required by the definition of fibration
(Definition~\ref{def:fib}) are satisfied.  If uniqueness does not
hold, excess arcs are removed; if existence does not hold, a new arc
is added.  Since the arcs removed or added are exactly $\Delta_\xi$,
items (1) and (2) are clearly true.  Assume that we have some $G''$
satisfying item (3) but with a smaller difference.  This can only mean
that $G''$ either has one less arc of those that we added, or one more
arc of those that we deleted. In both cases, $\xi''$ would not be a
fibration.
\end{proof}

The above theorem tells us that Algorithm~\ref{algo:qf-to-fib} adjusts a graph so to turn a quasifibration into a fibration, with a minimal number
of changes. Note that while it is impossible to do this with less changes, the construction of the new graph is somehow arbitrary, because we can choose
arbitrarily which excess arcs to remove and to which exact source to connect the new arcs that were added to correct deficiencies (the two
branches of the if in Algorithm~\ref{algo:qf-to-fib}).

Let us describe how this procedure is applied to the graph $G$ of Figure~\ref{fig:qf-example}:
\begin{itemize}
  \item Node $y \in N_B$ has six in-coming arcs (blue, red, cyan, dashed blue, dashed red, dashed cyan), all having $x$ as source. 
So, the in-neighborhood of $y_1$ in $G$ should contain exactly one instance of these six arcs of the base, each coming from a blue node.
We observe the following exceptions: there are two blue arcs (from $x_3$ and from $x_1$) instead of one, 
and no dashed cyan. Similarly, looking at the in-neighborhood of $y_2$, we find that the blue arc is missing.
  \item Node $x \in N_B$ has only one in-coming arc (the snake blue arc) coming from $z$. Looking at the $x_i$'s in $G$, we note
that $x_1$ has two incoming snake blue arc.   
\end{itemize}

No other exception is present. So $\xi$ has a total excess of 2 and a total deficiency of 2; the total error $\Delta_\xi$ is 4.
If we apply Algorithm~\ref{algo:qf-to-fib} to $G$ we might obtain the graph $G'$ shown in Figure~\ref{fig:qf-example-fixed}.
Note that this is \emph{not} the only graph we can obtain, because of the non-determinism of the algorithm: for instance,
we might as well have removed the blue arc $x_3 \to y_1$ (instead of $x_1 \to y_1$); similarly, the source of the new blue arc $x_4 \to y_2$
might have been chosen differently (for instance, we might have chosen it to be $x_1 \to y_2$). The same for the new cyan dashed
arc $x_5 \to y_1$. 

The reader may be unsatisfied with this result: (s)he might have expected to obtain the much more symmetric graph of Figure~\ref{fig:intuition-example-colored} (left).
Recall, however, that we are not looking for an automorphism symmetry, but rather for a(ny) fibration symmetry.
 
\subsection{Step (2): The {\scshape BuildQF} algorithm}

Step (2) of our plan requires building a quasifibration from an
equivalence relation on the nodes.  More precisely, given an
equivalence relation $\sim$ on the nodes of a graph $G$, we want to
determine a base graph $B$ and an surjective homomorphism $\xi: G \to B$ whose
fibres are exactly the equivalence classes of $\sim$, and having the
smallest total error (i.e., $\xi$ must be as close as possible to a
fibration).  This is what Algorithm~\ref{algo:build-qf} does, as
ensured by Theorem~\ref{teo:build-qf}.

\begin{algorithm}[H]
    \SetAlgoLined
    \LinesNumbered
    \SetKwInOut{Input}{Input}
    \SetKwInOut{Output}{Output}
    \Input{a graph $G$ with an equivalence relation $\sim$ on $N_G$}
    \Output{a quasifibration $\xi: G \to B$}
    $N_B\leftarrow N_G/\sim$ (the equivalence classes of $\sim$)\;
    $\xi(x)\leftarrow [x]_\sim$ (the equivalence class containing $x$)\;
    
    \For{every pair of classes $X,Y\in N_B$} {
        let $Y=\set{y_1,\dots,y_k}$\;
        \For{$i=1,\dots,k$} {
            $v_i \leftarrow \text{number of arcs $a \in G(-,y_i)$ such that $s_G(a) \in X$}$\;
        }
        \If{all $v_i$ are $0$}{
            \textbf{continue}\;
        }
        let $z$ be (any) positive median of $\langle v_1,\dots,v_k\rangle$\;
        add to the graph $B$ new arcs $b_0,\dots,b_{z-1}$ with source $X$ and target $Y$\;
        \For{$i=1,\dots,k$} {
            $c\leftarrow 0$\;
            \For{$a \in G(-,y_i)$ such that $s(a) \in X$} {
                $\xi(a)\leftarrow b_{c \bmod z}$\;
                $c\leftarrow c+1$\;
            }
        }
    }
    \Return{$\xi: G \to B$}
    \caption{\label{algo:build-qf}{\scshape BuildQF}: This algorithm builds a quasifibration $\xi: G \to B$ with the properties described in Theorem~\ref{teo:build-qf}.}
\end{algorithm}

\begin{theorem}
\label{teo:build-qf}
Algorithm~\ref{algo:build-qf}, given a graph $G$ and an equivalence
relation $\sim$ on its nodes, builds an surjective quasifibration
$\xi: G \to B$ such that:
\begin{enumerate}
  \item the equivalence classes of $\sim$ are the fibres of $\xi$;
  \item if $\xi': G \to B'$ is a surjective quasifibration whose fibres are the equivalence classes of $\sim$, then the
  total error of $\xi'$ is at least as large as that of $\xi$.
\end{enumerate}
\end{theorem}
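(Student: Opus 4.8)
The plan is to prove the two claims almost independently, reducing the optimality statement (item~2) to a local, per-class-pair argument that ultimately rests on the classical fact that a median minimizes the sum of absolute deviations. For item~1, the node map $\xi(x)=[x]_\sim$ is surjective onto $N_B=N_G/\sim$ by construction and its fibres are exactly the classes of $\sim$; that $\xi$ is a graph homomorphism is immediate, since each arc $a\in G(-,y_i)$ with $s(a)\in X$ is sent to an arc $b_{c\bmod z}$ of source $X$ and target $Y$, so source and target classes are respected. The only nontrivial point is surjectivity on arcs. First I would note that, for a fixed $y_i$, the round-robin routing uses bin $b_j$ exactly when $v_i>j$; hence a base arc $b_j$ between $X$ and $Y$ has a preimage iff $\max_i v_i>j$. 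Since $z$ is a positive median of $\langle v_1,\dots,v_k\rangle$ we always have $z\le\max_i v_i$, so every $b_j$ with $0\le j\le z-1$ is hit and $\xi$ is surjective on arcs. As any surjective homomorphism is a surjective quasifibration in the sense of Definition~\ref{def:quasi-fib}, item~1 follows.

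For item~2 the key observation is that the total error of any homomorphism $\xi'$ respecting the fibres of $\sim$ decomposes as a sum over ordered pairs of classes $(X,Y)$: a base arc $a$ has a fixed source class $X=s(a)$ and target class $Y=t(a)$, and every lifting of $a$ at a node $y\in Y$ is an arc of $G$ from $X$ into $y$. It therefore suffices to prove the bound pair by pair. Fix $(X,Y)$ with $Y=\{y_1,\dots,y_k\}$ and let $v_i$ count the arcs of $G$ from $X$ into $y_i$, exactly as in the algorithm. If $\xi'$ uses $z'$ parallel base arcs from $X$ to $Y$ and routes the $v_i$ arcs into $y_i$ so that $n'_{ij}$ of them hit the $j$-th base arc (hence $\sum_j n'_{ij}=v_i$), then the contribution of $(X,Y)$ to the total error is $\sum_i\sum_j |n'_{ij}-1|$. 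By the triangle inequality $\sum_j|n'_{ij}-1|\ge |\sum_j(n'_{ij}-1)|=|v_i-z'|$, so this contribution is at least $\sum_i|v_i-z'|$. Moreover $z'\ge 1$ is forced whenever some $v_i>0$, because an arc of $G$ from $X$ into $y_i$ must map to a base arc from $X$ to $Y$, so at least one such base arc must exist.

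It then remains to compare with what the algorithm achieves. A direct computation with the round-robin routing shows that for each $y_i$ the bin counts lie in $\{\lfloor v_i/z\rfloor,\lceil v_i/z\rceil\}$, and summing $|n_{ij}-1|$ gives exactly $|v_i-z|$ in both the case $v_i\ge z$ (all bins $\ge 1$) and $v_i<z$ (exactly $v_i$ bins equal to $1$, the rest $0$); hence the algorithm's contribution for the pair is precisely $\sum_i|v_i-z|$. The final ingredient is that $z$, being a median of $\langle v_1,\dots,v_k\rangle$ constrained to be positive, minimizes $z'\mapsto\sum_i|v_i-z'|$ over all integers $z'\ge 1$: the map is convex and piecewise linear, its unconstrained minimizers form the interval between the lower and upper medians, and the constraint $z'\ge 1$, when binding, is resolved at $z'=1$ by convexity, which is exactly the ``positive median'' prescription. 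Combining, the algorithm's per-pair contribution $\sum_i|v_i-z|$ equals $\min_{z'\ge 1}\sum_i|v_i-z'|$, which lower-bounds the contribution of any competing $\xi'$; summing over all pairs $(X,Y)$ yields $\Delta_\xi\le\Delta_{\xi'}$.

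I expect the main obstacle to be bookkeeping rather than conceptual: carefully checking that the round-robin attains equality $\sum_j|n_{ij}-1|=|v_i-z|$ in every regime, and pinning down the exact meaning of ``positive median'' so that it coincides with the constrained minimizer of the sum of absolute deviations, including the degenerate case where more than half of the $v_i$ vanish, where one is forced to take $z'=1$. Everything else reduces to the triangle-inequality lower bound and the standard fact that the median minimizes absolute deviation.
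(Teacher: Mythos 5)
Your proposal is correct, and its skeleton matches the paper's: both decompose the total error over ordered pairs of classes $(X,Y)$, reduce the per-pair problem to choosing the number $h$ of parallel base arcs, and conclude with the fact that a (positive) median minimizes $\sum_i|v_i-h|$ over integers $h\geq 1$, with the degenerate all-zeros-majority case pinned at $h=1$ by convexity. Where you differ is the middle step. The paper sets up an explicit integer program over the matrices $m_{ij}$ (lifting counts) together with $h$, shows that for fixed $h$ an optimal matrix has $\max(h-v_i,0)$ zero entries in row $i$, and then minimizes $f(h)=2\sum_i\max(h-v_i,0)-hk$ by a discrete-derivative argument over the intervals $[v_\ell,v_{\ell+1}]$. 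You bypass the structural analysis of the optimal matrix entirely: the triangle inequality $\sum_j|n'_{ij}-1|\geq|v_i-z'|$ lower-bounds any competitor row by row, and a separate direct computation shows the round-robin assignment attains $|v_i-z|$ exactly. The two reductions are equivalent (the paper's $f(h)$ plus the omitted constant $\sum_i v_i$ is precisely $\sum_i|v_i-h|$), but your version is more elementary and cleanly separates the lower bound from the achievability argument. You also gain something the paper's proof skips: an actual verification of item~1, in particular that $\xi$ is surjective on arcs because $z\leq\max_i v_i$ guarantees every base arc $b_j$ is hit by the round-robin; the paper's proof addresses only the optimality claim and leaves the homomorphism and surjectivity properties implicit.
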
 
\begin{proof}
Let us consider a generic surjective homomorphism $\xi: G \to B$ satisfying the
first property of the statement.  We can identify the nodes of $B$
with the equivalence classes of $\sim$ (because $\xi$ is surjective
and its fibres are required to be the equivalence classes of $\sim$).
Let us write $T\in \sim$ to mean that $T$ is an equivalence class, and
$[x]$ for the equivalence class of node $x \in N_G$.

With some abuse of notation, let $\tld ay$ denote the set of
$\xi$-counterimages of $a \in A_B$ with target $y \in N_G$ (this set
is a singleton if $\xi$ is a fibration, but it can have arbitrary
cardinality otherwise).  We can write the total error as
\[
	\Delta_\xi=\sum_{\substack{X,Y \in \sim\\Y=\set{y_1,\dots,y_k}}} \sum_{i=1}^k \sum_{a \in B(X,Y)} \left|\left|\tld a{y_i}\right|-1\right|.
\]
Now, consider the contribution of two specific $X,Y\in \sim$ in the
summation; let $Y=\set{y_1,\dots,y_k}$ and $v_i$ be the cardinality of
$|G(X,y_i)|=v_i$.  The function $\xi$ determines a matrix $M$ of
natural numbers, with $k$ rows and $h=|B(X,Y)|$ columns, where
$m_{ij}$ is the number of arcs in $G(X,y_i)$ that are mapped by $\xi$
to a specific arc of $B(X,Y)$.  The sum of the $i$-th row must equal $v_i$.
We want to choose $h$ and the matrix $M$ so that $\sum_{i,j}
|m_{ij}-1|$ is minimized.  The problem of finding this matrix can be
seen as an integer optimization problem with variables $m_{ij}$ (the
number of columns $h$ is also part of the optimization) where the
objective is
\[
	\text{minimize } \sum_{i=1}^k \sum_{j=1}^h|m_{ij}-1|
\]
with linear constraints:
\[
\begin{cases}
	&\sum_j m_{ij} = v_i \quad \text{i=1,\dots,k} \\
	& 1 \leq h\\
	&h, m_{ij} \in {\mathbf N}.
\end{cases}
\]
Since there is no interdependence between the rows, we can rewrite the system as follows: let $z_i$ be the number of null entries on the $i$-th
row; the remaining $h-z_i$ entries are $\geq 1$ and their sum is required to be $v_i$. So, for every row $i=1,\dots,k$
\[
	\sum_{j=1}^h|m_{ij}-1| = z_i + v_i - (h - z_i) = v_i - h + 2z_i. 
\]
We can thus equivalently state the optimization problem as follows\footnote{We can omit from the objective function the term $\sum_i v_i=|G(X,Y)|$ because
it is constant and does not influence minimization.}
\[
	\text{minimize }\ 2\left(\sum_{i=1}^k z_i\right)-hk
\]
with linear constraints:
\[
\begin{cases}
	&h-v_i \leq z_i \quad \text{i=1,\dots,k}\\
	&1 \leq h\\
	&z_i, h \in {\mathbf N},
\end{cases}
\]
because no row can contain more than $v_i$ non-null entries ($h-z_i\leq v_i$).
This is now an ILP problem with $k+1$ variables $(z_1,\dots,z_k,h)$. 

Let us assume without loss of generality that $v_1\leq v_2\leq \dots
\leq v_k$, and consider an optimal solution $(z_1^*,\dots,z_k^*,h^*)$.
Let $i^*$ be the largest index for which $h^*\geq v_{i^*}$.

Define a new solution $(\bar z_1,\dots,\bar z_k,h^*)$ by letting
\[
	\bar z_i=\max(h^*-v_i,0).
\]
 It is easy to see that this is also an admissible solution, and its cost is
\begin{multline*}
	2\sum_{i=1}^k \max(h^*-v_i,0)-h^*k=
	2\sum_{i=1}^{i^*} (h^*-v_i)-h^*k\\
	\leq 2 \sum_{i=1}^{i^*} z_i^*-h^*k 
	\leq 2 \sum_{i=1}^{k} z_i^*-h^*k;  
\end{multline*}
since $(z_1^*,\dots,z_k^*,h^*)$ is optimal, so is $(\bar z_1,\dots,\bar z_k,h^*)$.
The problem is then just to find $h^*$ minimizing
\[
	f(h)=2\sum_{i=1}^k \max(h-v_i,0)-hk.
\]
If $h<v_1$, any increase in $h$ reduces $f(h)$.
If $v_\ell\leq h\leq h+d\leq v_{\ell+1}$, then
\begin{multline*}
f(h+d)-f(h)=\\
=2\sum_{i=1}^k \max(h+d-v_i,0)-(h+d)k-2\sum_{i=1}^k \max(h-v_i,0)-hk=\\
=2\sum_{i=1}^\ell (h+d-v_i)-(h+d)k-2\sum_{i=1}^\ell (h-v_i)+hk=d(2\ell-k),
\end{multline*}
which is negative when $k<\ell/2$, and positive when $k>\ell/2$. Hence $h^*$ should be chosen as the median
of $v_1,\dots,v_k$.

These considerations explain why the choice made by Algorithm~\ref{algo:build-qf} is optimal. Note that we need to guarantee
that the median be $\geq 1$ otherwise the condition on row sums (for positive $v_i$) could not be satisfied.
As for the 
actual assignment of the arcs of $G(-,y_i)$ to the arcs of $B(X,Y)$ (the inner cycle in the Algorithm), 
it is irrelevant for the optimality.
\end{proof}

Fig.~\ref{fig:qf-build-example} shows an example of Application of Algorithm~\ref{algo:build-qf}:
\begin{itemize}
  \item ({\bf Left}) The graph $G$ (left) is ``almost'' symmetrical: adding the three missing arcs (dotted in the picture) and
removing the blue arc would produce the local in-isomorphism with the two classes represented by the colours.
  \item ({\bf Right top}) The table contains how many arcs come in every node of a specific class (row) coming from nodes of another class (column): for instance
the cyan/yellow entry means that three yellow nodes have 3 incoming arcs from cyan nodes, one ($y_5$) has two and another one ($y_1$) has one. 
  \item ({\bf Right bottom}) The resulting graph $B$: it has one node for each class ($x$ for the cyan class, $y$ for the yellow class),
and the number of arcs from class $X$ to class $Y$ is the median of the sequence in the table above.  The map $\xi: G \to B$ built
by the algorithm is a quasifibration with excess 1 and deficiency 4 (the map on the arcs is not shown, for the sake of simplicity). 
\end{itemize}

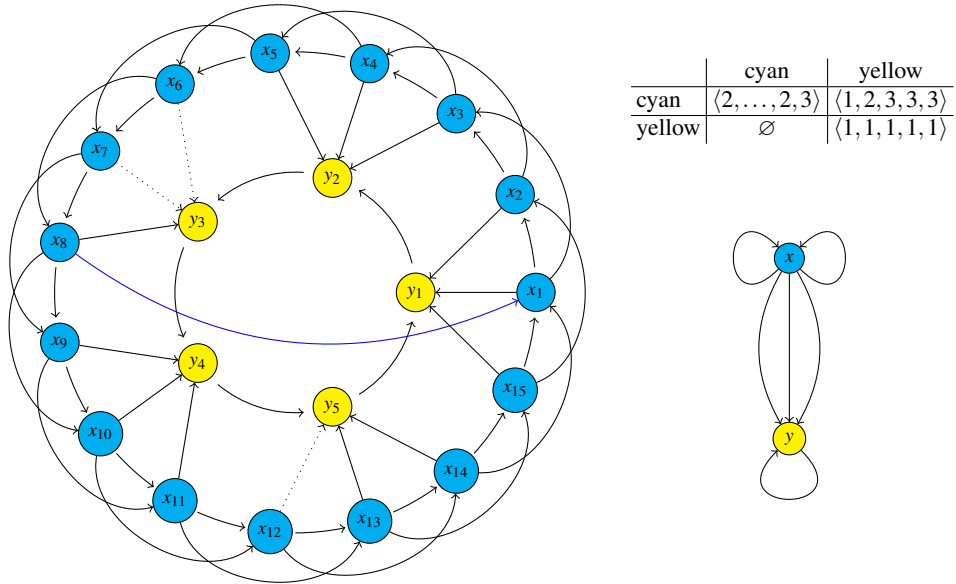
\begin{figure*}
\centering
\begin{tabular}{cc}
\raisebox{-.5\height}{\begin{tikzpicture}[scale=0.8, every node/.style={scale=0.8}]
\def \n {15}
\def \radius {4cm}
\def \margin {6} % margin in angles, depends on the radius
\foreach \s in {1,...,\n}
{
  \node[draw, circle, fill=cyan] (x\s) at ({360/\n * (\s - 1)}:\radius) {$x_{\s}$};
  \draw[->] ({360/\n * (\s - 1)+\margin}:\radius) 
    arc ({360/\n * (\s - 1)+\margin}:{360/\n * (\s)-\margin}:\radius); 
}
  \draw[->] (x1) to[bend right=70] (x3);
  \draw[->] (x2) to[bend right=70] (x4);
  \draw[->] (x3) to[bend right=70] (x5);
  \draw[->] (x4) to[bend right=70] (x6);
  \draw[->] (x5) to[bend right=70] (x7);
  \draw[->] (x6) to[bend right=70] (x8);
  \draw[->] (x7) to[bend right=70] (x9);
  \draw[->] (x8) to[bend right=70] (x10);
  \draw[->] (x9) to[bend right=70] (x11);
  \draw[->] (x10) to[bend right=70] (x12);
  \draw[->] (x11) to[bend right=70] (x13);
  \draw[->] (x12) to[bend right=70] (x14);
  \draw[->] (x13) to[bend right=70] (x15);
  \draw[->] (x14) to[bend right=70] (x1);
  \draw[->] (x15) to[bend right=70] (x2);
\def \n {5}
\def \radius {2cm}
\def \margin {14} % margin in angles, depends on the radius
\foreach \s in {1,...,\n}
{
  \node[draw, circle, fill=yellow] (y\s) at ({360/\n * (\s - 1)}:\radius) {$y_{\s}$};
  \draw[->] ({360/\n * (\s - 1)+\margin}:\radius) 
    arc ({360/\n * (\s - 1)+\margin}:{360/\n * (\s)-\margin}:\radius);
}

  \draw[->, blue] (x8) to[bend right=30] (x1);

  \draw[->, dotted] (x6) to (y3);
  \draw[->, dotted] (x7) to (y3);
  \draw[->] (x8) to (y3);

  \draw[->] (x9) to (y4);
  \draw[->] (x10) to (y4);
  \draw[->] (x11) to (y4);

  \draw[->, dotted] (x12) to (y5);
  \draw[->] (x13) to (y5);
  \draw[->] (x14) to (y5);

  \draw[->] (x15) to (y1);
  \draw[->] (x1) to (y1);
  \draw[->] (x2) to (y1);

  \draw[->] (x3) to (y2);
  \draw[->] (x4) to (y2);
  \draw[->] (x5) to (y2);

\end{tikzpicture}} &
{
\begin{tabular}{c}
	\begin{tabular}{l|c|c}
	& cyan & yellow\\
	\hline
	cyan & $\langle2, \dots, 2, 3\rangle$ & $\langle 1, 2, 3, 3, 3\rangle$ \\
	\hline
	yellow  &$\emptyset$ & $\langle 1, 1, 1, 1 ,1\rangle$ \\
	\end{tabular}\\[3em]
	\raisebox{-.5\height}{\begin{tikzpicture}[scale=0.8, every node/.style={scale=0.8}]
  \node[draw, circle, fill=cyan] (x) at (0,3) {$x$};
  \node[draw, circle, fill=yellow] (y) at (0,0) {$y$};

  \draw[->, loop left, out=-135, in=135, looseness=10] (x) to (x);
  \draw[->, loop right, out=-45, in=45, looseness=10] (x) to (x);

  \draw[->] (x) to[bend left] (y);
  \draw[->] (x) to (y);
  \draw[->] (x) to[bend right] (y);

  \draw[->, loop below, looseness=10, out=-45, in=-135] (y) to (y);
\end{tikzpicture}}
\end{tabular}
}
\end{tabular}
\caption{\label{fig:qf-build-example}An application of Algorithm~\ref{algo:build-qf}.}
\end{figure*}

If we try blindly to apply Algorithm~\ref{algo:qf-to-fib} to make it into a fibration, we do not obtain exactly what we may expect 
(that is, the deletion of the blue arc and the addition of the missing dotted arcs).
What happens will depend on the non-deterministic choices that the algorithm performs: Figure~\ref{fig:qf-build-example-fixed}
shows one of the possible results obtained. 
We intentionally started with a graph whose symmetries were also explainable by automorphisms: the algorithms do not reconstruct
the original graph (and in fact the final result is rigid), but it fully reconstructs its fibration symmetries. 

\begin{figure}
\centering
\begin{tikzpicture}[scale=0.8, every node/.style={scale=0.8}]
\def \n {15}
\def \radius {4cm}
\def \margin {6} % margin in angles, depends on the radius
\foreach \s in {1,...,\n}
{
  \node[draw, circle, fill=cyan] (x\s) at ({360/\n * (\s - 1)}:\radius) {$x_{\s}$};
  \draw[->] ({360/\n * (\s - 1)+\margin}:\radius) 
    arc ({360/\n * (\s - 1)+\margin}:{360/\n * (\s)-\margin}:\radius); 
}
  \draw[->] (x1) to[bend right=70] (x3);
  \draw[->] (x2) to[bend right=70] (x4);
  \draw[->] (x3) to[bend right=70] (x5);
  \draw[->] (x4) to[bend right=70] (x6);
  \draw[->] (x5) to[bend right=70] (x7);
  \draw[->] (x6) to[bend right=70] (x8);
  \draw[->] (x7) to[bend right=70] (x9);
  \draw[->] (x8) to[bend right=70] (x10);
  \draw[->] (x9) to[bend right=70] (x11);
  \draw[->] (x10) to[bend right=70] (x12);
  \draw[->] (x11) to[bend right=70] (x13);
  \draw[->] (x12) to[bend right=70] (x14);
  \draw[->] (x13) to[bend right=70] (x15);
  \draw[->] (x14) to[bend right=70] (x1);
  \draw[->] (x15) to[bend right=70] (x2);
\def \n {5}
\def \radius {2cm}
\def \margin {14} % margin in angles, depends on the radius
\foreach \s in {1,...,\n}
{
  \node[draw, circle, fill=yellow] (y\s) at ({360/\n * (\s - 1)}:\radius) {$y_{\s}$};
  \draw[->] ({360/\n * (\s - 1)+\margin}:\radius) 
    arc ({360/\n * (\s - 1)+\margin}:{360/\n * (\s)-\margin}:\radius);
}

  %\draw[->, blue] (x8) to[bend right=30] (x1);

  \draw[->, red] (x1) to (y3);
  \draw[->, red] (x2) to (y3);

  \draw[->] (x8) to (y3);

  \draw[->] (x9) to (y4);
  \draw[->] (x10) to (y4);
  \draw[->] (x11) to (y4);

  \draw[->, red] (x4) to (y5);
  \draw[->] (x13) to (y5);
  \draw[->] (x14) to (y5);

  \draw[->] (x15) to (y1);
  \draw[->] (x1) to (y1);
  \draw[->] (x2) to (y1);

  \draw[->] (x3) to (y2);
  \draw[->] (x4) to (y2);
  \draw[->] (x5) to (y2);

\end{tikzpicture}
\caption{\label{fig:qf-build-example-fixed}An application of Algorithm~\ref{algo:qf-to-fib} to the quasifibration obtained after 
applying Algorithm~\ref{algo:build-qf} to Figure~\ref{fig:qf-build-example}. The newly added arcs are in red, whereas the old blue arc is deleted.}
\end{figure}
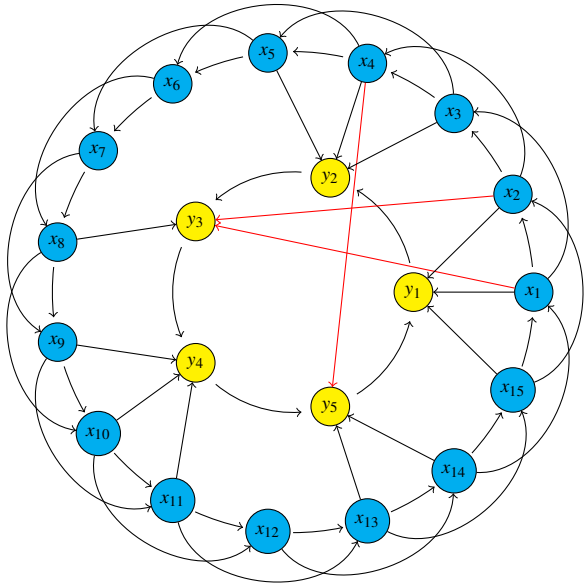

\begin{algorithm}[H]
    \SetAlgoLined
    \LinesNumbered
    \SetKwInOut{Input}{Input}
    \SetKwInOut{Parameter}{Parameter}
    \SetKwInOut{Output}{Output}
    \Input{a graph $G$}
    \Parameter{three positive integers $t, m_1, m_2$ (with $m_1\leq m_2$)}
    \Output{an equivalence relation $\sim$ on $N_G$}
    
    \For{every node $x\in N_B$} {
        $T_x \leftarrow \view Gx\upharpoonright t$\;
    }
    
    \For{every pair of nodes $x,y\in N_G$} {
        $M_{xy} \leftarrow \text{tree edit-distance between $T_x$ and $T_y$}$\;
        \tcc{Try unordered; then ordered, if time-out expires}
    }
    
    normalize $M$ so that its $L_1$-norm is 1\;
    let $z_t$ be the number of non-isomorphic views of height $t$\;
    
    \For{$c=m_1,\dots,\min(m_2,z_t)$} {
        perform agglomerative clustering of $N_G$ with $c$ clusters using distance matrix $M$\;
        let $\sim_c$ be the resulting equivalence relation\;
        let $s_c$ be its silhouette index\;
    }
    
    \Return{$\sim_c$ with maximum $s_c$ value}
    \caption{\label{algo:equiv-rel}{\scshape BuildEquiv}: This algorithm builds an equivalence relation on the nodes of $G$, trying to reconstruct its fibration symmteries.}
\end{algorithm}

\subsection{Step (1): The {\scshape BuildEquiv} algorithm}
\label{sec:equiv-rel}

Step (1) of our plan requires to obtain an equivalence relation on the
nodes of $G$ that is as close as possible to the original symmetries
in the graph without noise (that is, $\mineq_H$).  As we said, this
task is mostly heuristic, and based on the idea that a direct
application of the minimum fibration construction $\mineq_G$ (see
Section~\ref{sec:ufc}) would not work because of the presence of
noise.  We took into consideration many alternatives, and finally
chose one that is at the same time consistent with our view of the
phenomenon under description and that turned out to be almost always
the best alternative in our experiments. We postpone to
Section~\ref{sec:experiments} a description of some alternative
heuristics.

Our final solution (Algorithm~\ref{algo:equiv-rel}) is to relax the
construction of $\mineq_G$ (that puts together two nodes if and only
if their views are isomorphic) by considering instead \emph{tree
  edit-distance}~\cite{bille2005survey}, a generalization to trees of
the standard Levenshtein distance between strings\footnote{Although
  general edit distance is defined for trees with labels on the tree
  nodes, our trees are unlabelled (equivalently, all tree nodes have
  the same label), so the only operations allowed are deletion or
  insertion of a subtree.}. Unfortunately, the general problem of
computing tree edit-distance on unordered trees (i.e., trees where the
order of chidren is irrelevant, like ours) is unlikely to be
solvable in polynomial time: it is MAX SNP hard, hence not
even approximable with a PTAS unless P=NP. On small trees, though,
the computation is possible with dynamic programming as described
in~\cite{yoshino2013dynamic}.  On the other hand, tree edit-distance
for ordered trees can be solved efficiently, and efficient
well-documented implementations exist~\cite{paassen2015toolbox} (see
also~\cite{zhang1989simple}).

Benjamin Paassen, the author of the above toolbox, provided us an experimental implementation of the algorithm of~\cite{yoshino2013dynamic},
and also of Zhang's polynomial-time constrained version that computes an upper bound of the unordered edit distance. 

After some experimentation with combining the various algorithms at our disposal, we observed that the best 
solution for our problem was to run the dynamic-programming algorithm of~\cite{yoshino2013dynamic} with a time-out,
and to resort to ordered (exact) edit distance if the time-out expires. 

The (mixed) tree edit-distances computed this way provide a distance matrix that is fed to a clustering algorithm. We employed hierarchical agglomerative clustering~\cite{nielsen2016introduction}
with average distance between clusters as linkage option\footnote{We used the \texttt{scikit-learn} implementation (AgglomerativeClustering).}.
Since it is impossible to establish a reasonable threshold for distances (they are too dependent on the actual graph structure and density),
we fix the number of clusters $c$ instead, and try for different values of $c$.
We compute the silhouette coefficient~\cite{rousseeuw1987silhouettes} for each $c$, to see how much the clustering obtained fits the original
distances, and select the clustering with largest silhouette: this is in fact one of the many possible techniques
generally adopted in data analysis to select the number of clusters\footnote{The problem is very well discussed and no general solution
exists~\cite{wierzchon2018modern}: in most cases a manual inspection is required (for instance, in the well known ``elbow'' method).
The use of indices like the silhouette coefficient to try to automatize this analysis is certainly error-prone, and may
produce suboptimal results.}. 

One useful note on clustering is the following: if there are $z_t$ non-i\-so\-mor\-phic views of depth $t$, having more than $z_t$ clusters can only decrease
the silhouette coefficient (because the clustering will be obliged to break classes containing trees that are at distance zero). 
The value $z_t$ can be computed by running the first $t$ steps of Cardon-Crochemore and looking at how many classes are found.

Observe that if $x \mineq_G y$ then $\view Gx \cong \view Gy$, hence their tree edit-distance (at any depth) is zero;
so the two nodes should end up in the same cluster (provided that we are using at least as many clusters as the
number of classes in $\mineq_G$). For this reason, in non-degenerate cases the output of Algorithm~\ref{algo:equiv-rel}
will always be a coarsening of $\mineq_G$.

\section{Experiments and Discussion}
\label{sec:experiments}

The purpose of this section is to justify the heuristic choices of
Algorithm~\ref{algo:equiv-rel} based on some experiments performed on
real as well as synthetic datasets. In all cases, we will have a
ground truth that is obtained by human inspection in the former case
whereas it comes from the way synthetic datasets are built in the
latter.

\paragraph{Performance metrics.}
Since what we want to solve is basically a clustering problem, one of the
main ingredients that we will use in our experimental analysis is a
clustering-comparison metrics w.r.t.~the ground truth. We decided to
adopt uniformly the AMI score, a version of the standard NMI
(Normalized Mutual Information) score adjusted for
chance~\cite{vinh2010information}, as implemented in the
\texttt{sklearn.metrics} package.  We will write $\ami(\sim_1,\sim_2)$
to denote the AMI score between $\sim_1$ and $\sim_2$: recall that
this value is $1$ if $\sim_1$ and $\sim_2$ coincide, whereas random
independent clustering have an expected AMI around 0 on average (the score itself can be negative).
 
We tried other alternatives (like the adjusted Rand index) that offer
more or less consistent results, although on our datasets they tend to
penalize more even small classification errors.
   
\paragraph{Real-world datasets.}
We used a real-world dataset to test the algorithms. We use the
connectome of the worm {\it C. elegans} which is a fully mapped neural
system of a model organism. The connectome consist of 302 neurons, and
here we consider the set of neurons with their chemical synapses
connections that are activated when the worm is moving forward and
backward, separately. It has been shown in~\cite{morone2019NatComm}
that these networks are characterized by pseudo-symmetries, i.e.,
almost automorphisms.  Below, we use these two networks to test the
algorithms and show that, more generally, they are characterized by
quasifibration. In~\cite{morone2019NatComm} the authors have presented a
manually-repaired ideal network with perfect symmetries and here we use
this ideal network as a 'ground truth' to compare with the results of
the present algorithms.

\paragraph{Synthetic datasets.}
In order to build synthetic datasets, we should produce a fibration-rich graph $H$ and introduce some noise in it.
We proceed in the following way:
\begin{itemize}
  \item We started from a directed scale-free graph with $n$ nodes~\cite{bollobas2003directed}, and produced its 
  minimum base, that is fibration prime, and we built a graph $H$ that is fibred over it: for every node of the base we decided at random the size of its fibre
  (an integer between $v_{\mathrm min}$ and $v_{\mathrm max}$) and established how to lift nodes arbitrarily 
  but so that the resulting assignment is a fibration.  
  \item We added and/or removed $s$ arcs from $H$, obtaining the graph $G$ whose nodes will be clustered. We use $\mineq_H$ as ground truth.
\end{itemize}
The above process has a number of parameters that will influence the result, besides the obvious fact that the whole
construction is probabilistic. 

\paragraph{Methods.}
For our comparison, we consider the following alternatives:
\begin{itemize}
  \item \emph{Cardon-Crochemore:} we used $\sim_G$ as a baseline, to see if and how much we improve over it; 
  \item \emph{Variants:} we took into consideration some possible variants of Algorithm~\ref{algo:equiv-rel}; in particular:
  	\begin{itemize}
  	  \item \emph{Linkage type:} one crucial aspect in agglomerative clustering is deciding how the distance between two
  	  clusters is computed; we considered three of the most common alternatives in the literature~\cite{wierzchon2018modern}:
  	  ``single'', ``complete'' and ``average'' (the distance between two clusters is defined as the minimum, maximum, average distance between
  	  two points in the two clusters); 
  	  for the sake of simplicity, we do not show the results of this analysis here, but 
  	  report that ``single'' always outperforms (or is equivalent to) the other linkage types;
  	  since it is also more efficient, it is our final choice in Algorithm~\ref{algo:equiv-rel};
  	  \item \emph{Number of clusters:} as explained in Section~\ref{sec:equiv-rel}, selecting the number of clusters is 
  	  a hard and delicate task; since the silhouette method is normally not adopted without human intervention, we decided
  	  to compare the results obtained by Algorithm~\ref{algo:equiv-rel} with those that would be obtained if we knew
  	  the real number of clusters (the number of clusters in the ground truth). 
	\end{itemize}
  \item \emph{Centrality-based clustering:} a naive, simple alternative to Algorithm~\ref{algo:equiv-rel} would be to use some diffusion-based centrality algorithm \emph{\`a la} Katz~\cite{KatNSIDSA}
  and then to cluster the nodes based on their centrality score, e.g. using K-means; note that again, since the number of clusters is unknown,
  we decide to use the number of clusters from the ground-truth; we ran in fact a number of centrality algorithms~\cite{BoVAC} and, for the
  ones that have a parameter, we used the value that produced the maximum $\ami$ with the ground truth. 	  
\end{itemize}

\subsection{Full Reconstruction}

\begin{figure*}
\centering
\includegraphics[scale=.6]{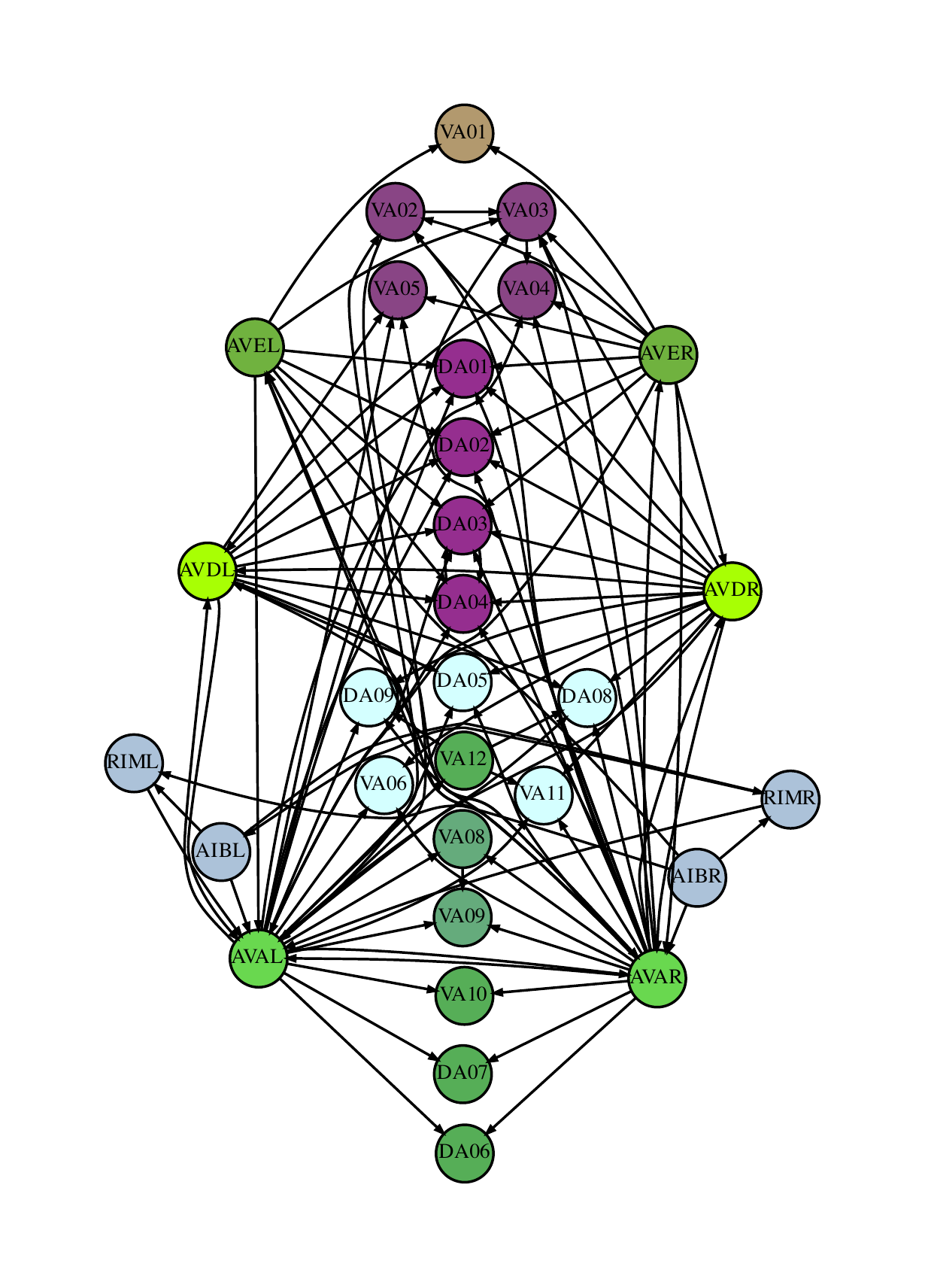}
\includegraphics[scale=.6]{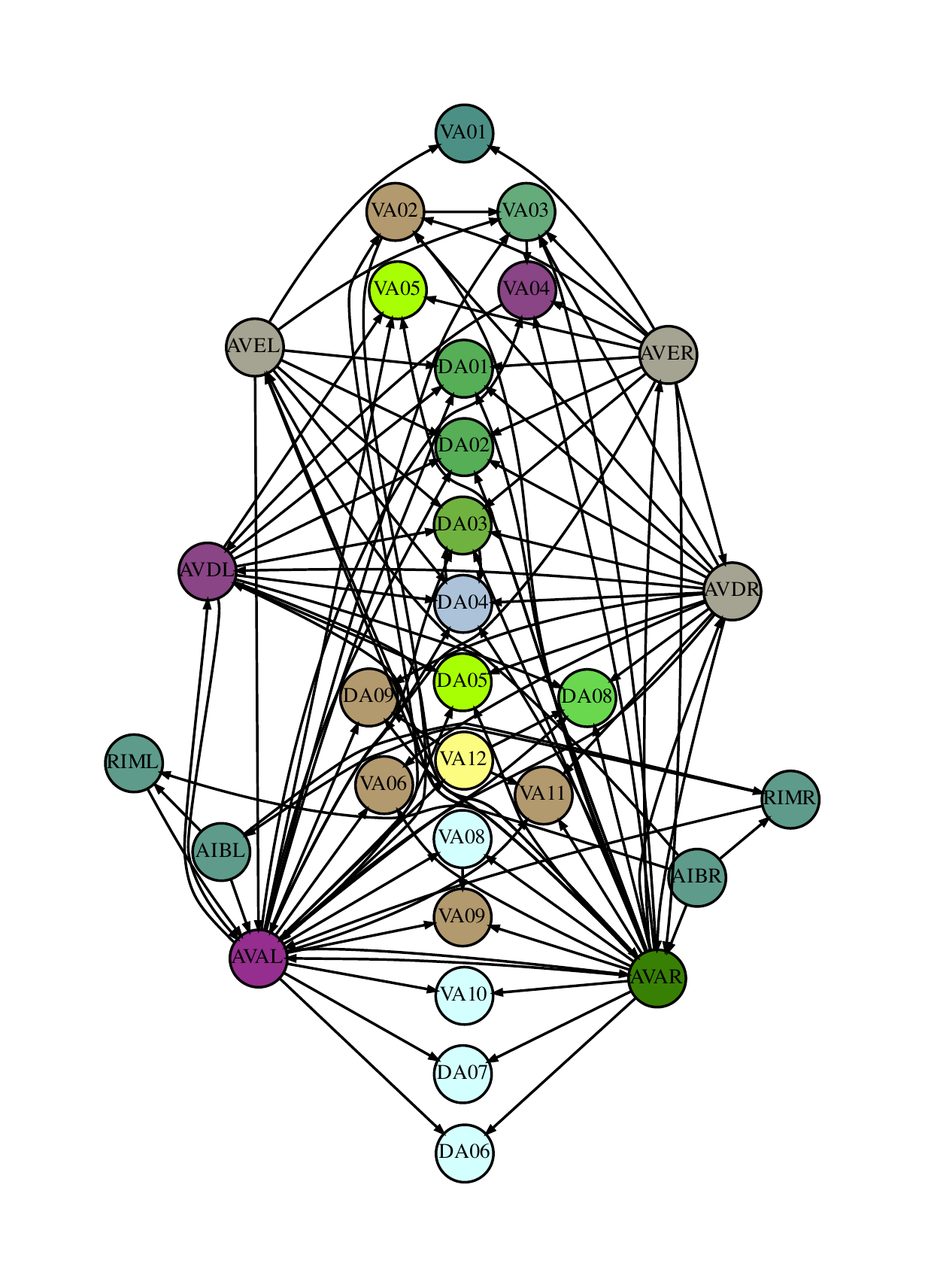}
\caption{\label{fig:back3}The backward network, colored with the human
  crafted ground truth from \cite{morone2019NatComm} (left) and with
  the equivalence relation obtained by Algorithm~\ref{algo:equiv-rel}
  (right) with depth 3 (adjusted mutual information is $0.417$).}
\end{figure*} 
\begin{figure*}
\centering
\includegraphics[scale=.25]{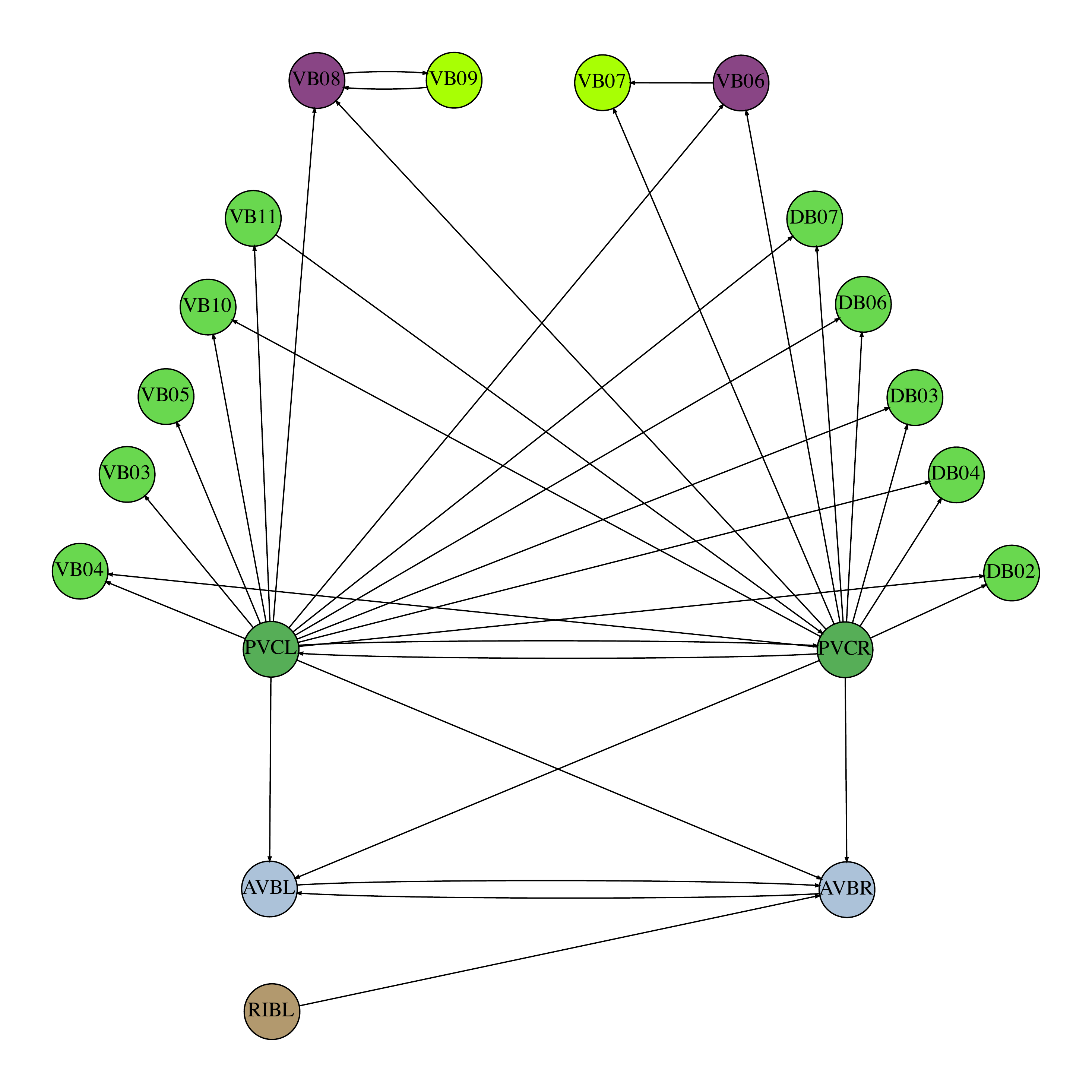}
\includegraphics[scale=.25]{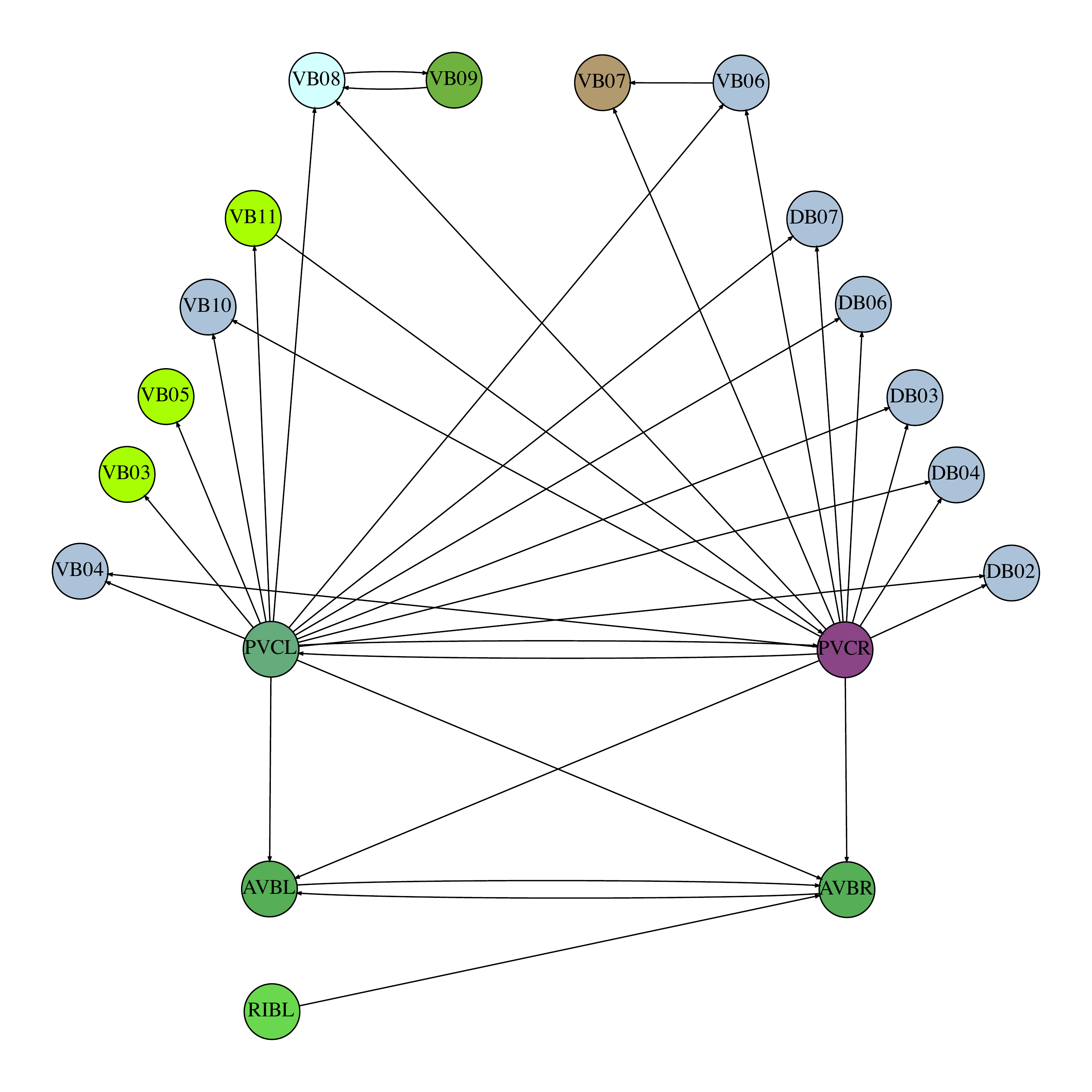}
\caption{\label{fig:forw4}The forward network, colored with the human
  crafted ground truth from \cite{morone2019NatComm} (left) and with
  the equivalence relation obtained by Algorithm~\ref{algo:equiv-rel}
  (right) with depth 4 (adjusted mutual information is $0.594$).}
\end{figure*} 

In this section we shall consider two examples of full reconstruction:
we will start from two real-world graph $G$ and apply the whole
process described in Section~\ref{sec:tools} to both of them.  In the
case we consider here, the graphs represent two neural circuits controlling the
locomotion function of {\it C. elegans} as defined in
Ref. \cite{morone2019NatComm}: the forward and backward circuits. The
backward graph has 30 nodes and 106 arcs, with an average degree of
$3.53$; the forward graph has 19 nodes and 35 arcs, with an average
degree of $1.84$.

For these datasets, we have a handcrafted ground-truth based on the
supposed function of each single node obtained in
\cite{morone2019NatComm}; the ground-truth for the backward graph
outlines 10 classes with various sizes (the largest class contains 5
nodes, whereas five classes contain 1 or 2 nodes); the ground-truth
for the forward graph has 6 classes (the largest class contain 10
nodes, whereas five cases contain 1 ore 2 nodes).  The ground-truth
is not provided to the algorithm, though, which just starts with the
original graph itself.

We tried our experiment with a depth (parameter of
Algorithm~\ref{algo:equiv-rel}) of 2, 3, 4 and 5. While there is a
clear improvement in the performances as the depth increases in the
case of the Forward network, the converse seems to happen for the
Backward network.  An inspection of the logging shows, however, that
the number of times the computation of the unordered tree-distance is
timed out grows: $60\%$ of the entries are timed out when the depth is
$4$ and more than $80\%$ when the depth is $5$. The frequency of this
event grows as the depth gets larger, because trees grow in size.

\begin{tabular}{l||c|c||c|c}
& \multicolumn{2}{c||}{Backward} & \multicolumn{2}{c}{Forward}\\
& \# clusters & $\ami$ & \# clusters & $\ami$\\
\hline
Depth 2 & 18 & 0.262 & 3 & 0.423\\
Depth 3 & 15 & 0.417 & 5 & 0.500\\
Depth 4 & 14 & 0.398 & 9 & 0.594\\
Depth 5 & 6 & 0.355 & 9 & 0.594\\
\hline
Cardon-Crochemore & 10 & 0.295 & 10 & 0.543
\end{tabular}  

\smallskip
The difference between the ground truth and the equivalence relation obtained at the end is shown in Figure~\ref{fig:back3} and Figure~\ref{fig:forw4}.

Running Algorithm~\ref{algo:build-qf}, we discover that the total
error of the reconstructed quasifibration $\xi: G \to B$ is 18 (three
arcs are in excess and fifteen are missing). The final fibration
$\xi': G' \to B$ obtained by Algorithm~\ref{algo:qf-to-fib} modifies
the graph $G$ as shown in Figure~\ref{fig:back3-diff}.

\begin{figure}
\centering
\includegraphics[width=.5\textwidth]{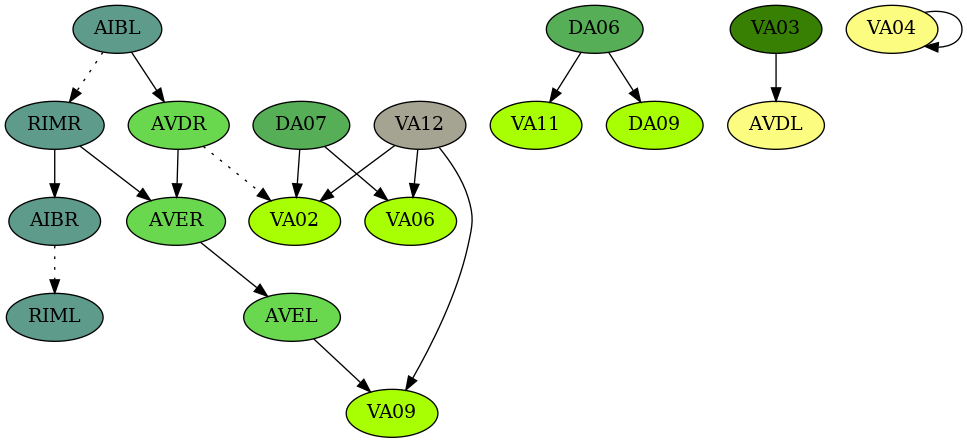}
\caption{\label{fig:back3-diff}The deleted (dashed) and added arcs that should be used to modify the graph $G$ of Figure~\ref{fig:back3} (right) to
turn the coloring into a local in-isomorphism.} 
\end{figure} 

\subsection{Synthetic datasets}

We ran synthetic experiments as described above, with $n=30$, $v_{\mathrm min}=1$, $v_{\mathrm max}=5$ and $s=5$,
and with depth 2 and 3.
Here are the main characteristics of the generated graphs:

\begin{tabular}{l|r|r|r}
& \# nodes & \# arcs & \# classes in the ground truth\\
\hline
Average & $25.5$ & $173.7$ & $8.5$\\
Std     & $6.5$  & $53.9$  & $1.7$\\
Min     & $9$    & $53$     & $4$\\
Max     & $49$    & $373$    & $15$\\
\end{tabular}  

\medskip
As the reader can see, the size, density and number of classes in the
ground truth match those of the real-world datasets.
Figure~\ref{fig:synth-30-5-depth3-boxplot} shows the boxplot with the
performances of the various clustering techniques mentioned above in the case of depth 3.
``Reduced UED/OED'' is the final result of the algorithm in the paper
(after step (4) of Section~\ref{sec:tools} is applied) when
Unordered/Ordered Edit Distance is used and the silhouette method is
applied to obtain the number of clusters.  ``Agglomerative UED/OED
exact'' uses the technique of Algorithm~\ref{algo:equiv-rel} but
avoids the use of silhouette and directly applies the number of
clusters from the ground truth.

\begin{figure}
\centering
\includegraphics[width=.5\textwidth]{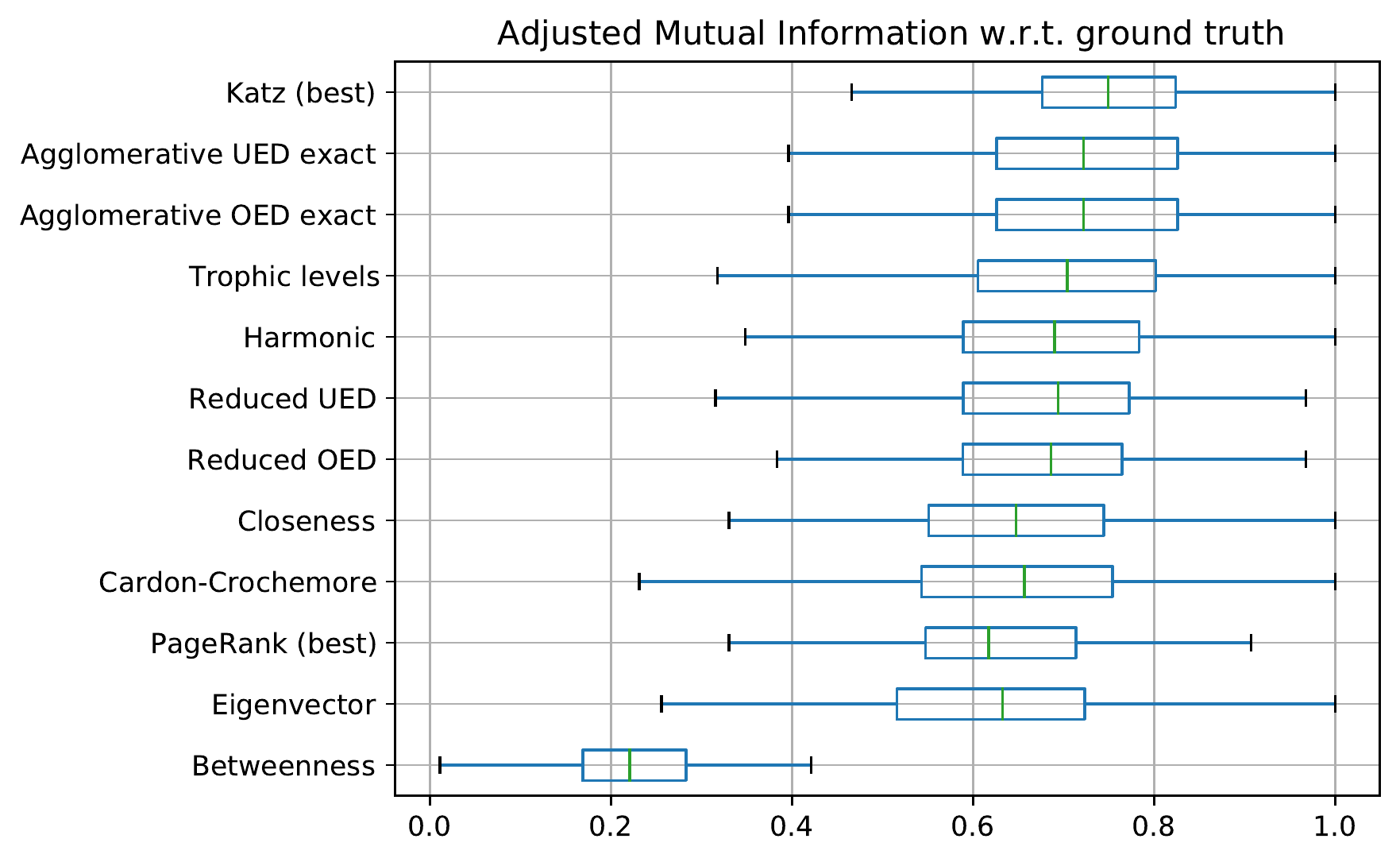}
\caption{\label{fig:synth-30-5-depth3-boxplot}Boxplot with the Adjusted Mutual Information for the clustering generated by various methods;
note that the only ones that do not use the number of classes in the ground truth are those called ``Reduced UED/OED'' 
and ``Cardon-Crochemore''.}
\end{figure}

It is worth observing that ``Reduced UED'' is always better than ``Cardon-Crochemore''. A more detailed analysis shows that the ratio between
``Reduced'' and the ground truth outperforms ``Cardon-Crochemore'' of $19\%$.
The discussed behaviour can also be seen in Figure~\ref{fig:synth-30-5-heat}, where we produce two heatmaps (one for depth 2 and one for depth 3);
each point in each heatmap represents the result of an experiment performed: its  X and Y coordinates are the Adjusted Mutual Information of ``Cardon-Crochemore'' and ``Reduced UED'', respectively.
We verify that the mass is on or above the diagonal, and that this situation improves as the depth is increased.  

\begin{figure}
\centering
\includegraphics[scale=.5]{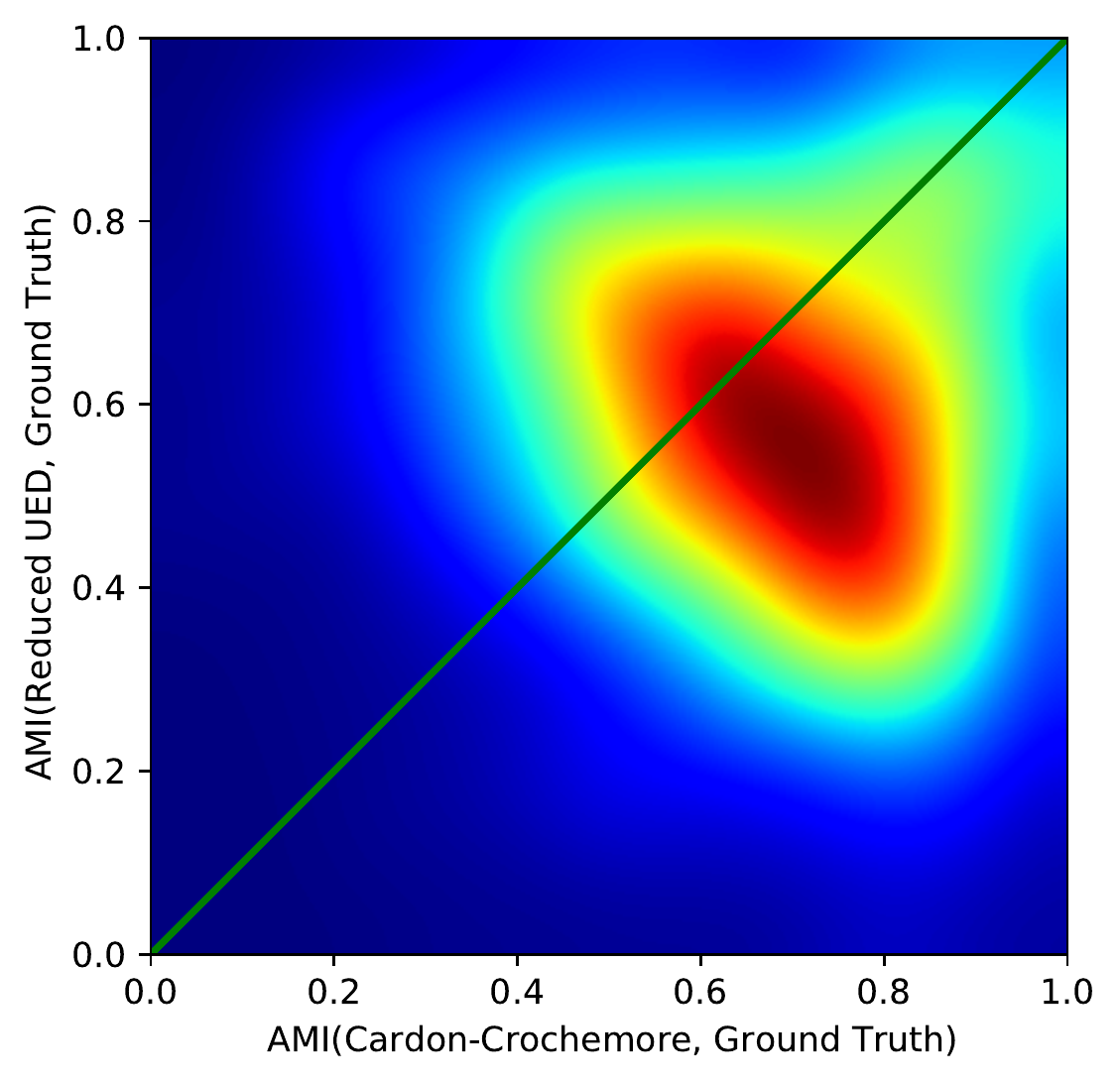}
\includegraphics[scale=.5]{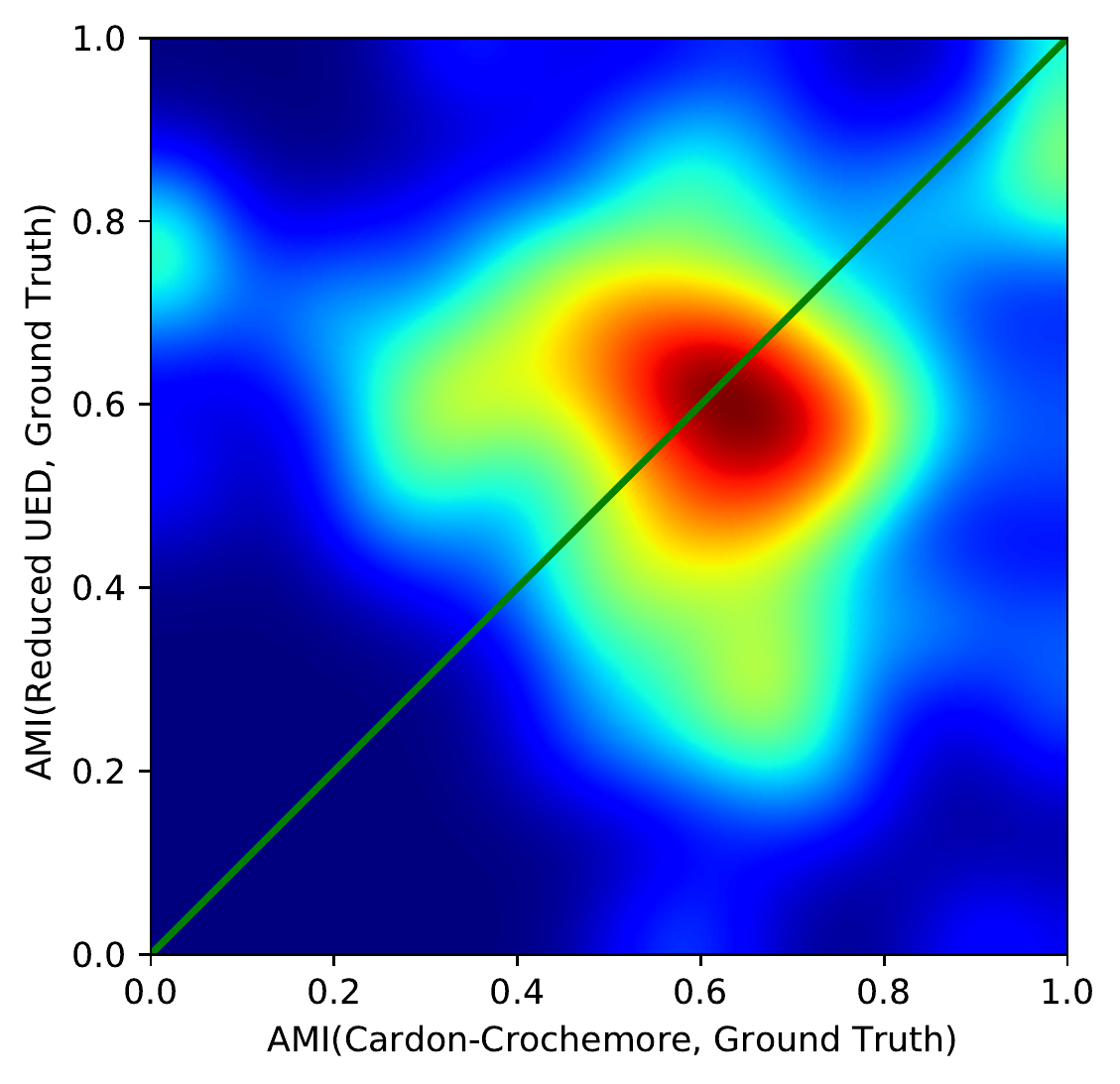}
\caption{\label{fig:synth-30-5-heat}Heatmap showing a comparison between Adjusted Mutual Information for ``Cardon-Crochemore'' (X axis) and ``Reduced UED'' (Y axis),
for depth 2 (top) and 3 (bottom). The mass is concentrated over the diagonal, and the improvement of ``Reduced UED'' over ``Cardon-Crochemore'' increases
with the depth (the core moves up).}
\end{figure}

\section{Conclusion}

Imperfections in real biological data is a major hurdle in biological network analysis. In this paper we present an approach to dealing with these imperfections via quasifibrations of graphs. We developed a four-step algorithm restoring the graph to a more symmetric version. First, we find the equivalence relations combining "almost" symmetric nodes. Second, we build a quasifibration using the equivalence classes identified in the first step. Third, we construct a graph by modifying a (provably) minimal amount of links in the original graph for which this quasifibration is a fibration. Last, we compute the minimal fibration on the obtained graph to coarsen the obtained result. It was shown analytically that steps 2-4 are performed in an optimal manner. First step is done employing agglomerative clustering.

We used synthetic datasets and real-world datasets studied in \cite{morone2019NatComm} to fine-tune the clustering method. Performance of the different variations was assessed using AMI (adjusted mutual information). Unordered tree edit-distance accompanied by single linkage with the number of clusters corresponding to the highest value of the Silhouette coefficient have shown the best performance of all considered cases. The algorithm outperformed Cardon-Crochemore by 19\% on average on the synthetic networks. On both real networks the algorithm showed a result better than Cardon-Crochemore by comparing outputs with manually curated results in \cite{morone2019NatComm}.

\section*{Acknowledgements}

We want to thank Sebastiano Vigna for many helpful discussions and Benjamin Paassen for providing us the implementations
of unordered tree-edit distance and for many useful insights on the
topic. Funding was provided by NIBIB and NIMH through the NIH BRAIN
Initiative Grant \# R01 EB028157.

\section*{Data Availability Statement}

All the algorithms are available at \url{https://github.com/boldip/qf} and \url{https://github.com/makselab/QuasiFibrations}. All data is available at \url{https://osf.io/amswe}.

%\nocite{*}
\bibliography{biblio,makse,extra}

\end{document}